\newtheorem{sat}{Theorem}[section]		
\newtheorem{lem}{Lemma}[section]
\newtheorem{kor}[lem]{Corollary}			
\newtheorem{prop}[lem]{Proposition}
\newtheorem{defi}{Definition}
\newtheorem*{defi*}{Definition}			
\newtheorem*{bei*}{Example}
\newtheorem*{sat*}{Theorem}				
\newtheorem*{kor*}{Corollary}
\newtheorem*{rmk*}{Remark}				
\newtheorem*{quest*}{Question}
\let\ssection=\section
\renewcommand{\section}{\setcounter{equation}{0}\ssection}
\newtheorem*{namedtheorem}{\theoremname}
\newcommand{\theoremname}{testing}
\newenvironment{named}[1]{\renewcommand{\theoremname}{#1}\begin{namedtheorem}}{\end{namedtheorem}}
\theoremstyle{remark}
\newtheorem*{bem}{Remark}
\newtheorem*{namedtheoremr}{\theoremnamer}
\newcommand{\theoremnamer}{testing}
			\newcommand{\BH}{\mathbb H}
\newcommand{\BR}{\mathbb R}			
\newcommand{\BN}{\mathbb N}			
\newcommand{\BS}{\mathbb S}			\newcommand{\BZ}{\mathbb Z}
			\newcommand{\BT}{\mathbb T}
\newcommand{\BB}{\mathbb B}
		\newcommand{\CN}{\mathcal N}
		\newcommand{\CR}{\mathcal R}
		\newcommand{\CV}{\mathcal V}
\newcommand{\D}{\partial}
\DeclareMathOperator{\vol}{vol}		
\DeclareMathOperator{\inj}{inj}
\newcommand{\comment}[1]{}
\DeclareMathOperator{\syst}{syst}
\DeclareMathOperator{\Vol}{Vol}
\DeclareMathOperator{\Prob}{Prob}
\DeclareMathOperator{\proj}{proj}
\DeclareMathOperator{\Corr}{Corr}
\DeclareMathOperator{\Var}{Var}
\DeclareMathOperator{\cov}{cov}
\DeclareMathOperator{\gap}{gap}
\DeclareMathOperator{\round}{Round}
\newcommand{\fsubd}{\mathrel{{\scriptstyle\searrow}\kern-1ex^d\kern0.5ex}}
\newcommand{\bsubd}{\mathrel{{\scriptstyle\swarrow}\kern-1.6ex^d\kern0.8ex}}
\newcommand{\Area}{\mathrm{Area}}
\renewcommand{\epsilon}{\varepsilon}
\renewcommand{\le}{\leqslant}
\renewcommand{\ge}{\geqslant}
\begin{document}

\title[]{Effective estimation of the dimension of a manifold from random samples}
  \author{Lucien Grillet}
\address{UNIV RENNES, IRMAR - UMR 6625, F-35000 RENNES, FRANCE}
\email{grillet.lucien@gmail.com}
\author{Juan Souto}
\address{UNIV RENNES, CNRS, IRMAR - UMR 6625, F-35000 RENNES, FRANCE}
\email{jsoutoc@gmail.com}

\begin{abstract}
We give explicit theoretical and heuristical bounds for how big does a data set sampled from a reach-$1$ submanifold $M$ of euclidian space need to be, to be able to estimate the dimension of $M$ with $90\%$ confidence.
\end{abstract}

\maketitle

\section{Introduction}

The {\em manifold hypothesis} asserts that naturally ocurring data sets $X\subset\BR^s$ behave as if they had been sampled from a low-dimensional submanifold $M$. There is then a large literature on {\em manifold learning}, that is on understanding how properties of the underlying manifold $M$ can be "learned" from the data set $X$---see for example  \cite{BNS,Fefferman,NSW,NM,BG14,LVbook,BKSW}---, but it is maybe fair to say that estimating the dimension $\dim(M)$ of the manifold, that is figuring out the {\em intrinsic dimension} of the data set, seems to be of particular interest. This is maybe so because the dimension is one of the most basic quantities associated to a manifold, but maybe also because of the importance of understanding the applicability to the data set $X$ of different dimension reduction schemes. In any case, there are very numerous estimators for the intrinsic dimension of a data set. We refer to the surveys \cite{Camastra,CamastraS,BKSW} for a brief discussion of many of those estimators and to the references therein for details---see also Chapter 3 in the monograph \cite{LVbook}. Anyways, all these dimension estimators are based on the idea that manifolds look locally like their tangent spaces, that is like euclidean space. For example, non-linear PCA aims at finding the dimension of tangent spaces. ANOVA aims at computing the average angle between vectors in a tangent space, exploiting the fact that for euclidean space itself one can read the dimension from the expected value. The estimator \eqref{eq corrsum intro} below, as well as closely related algorithms due to Takens \cite{Takens83,Takens85}, Theiler \cite{Theiler90} or Grassberger-Procaccia \cite{Grassberger-Procaccia}, exploits the fact that the {\em correlation dimension} of the manifold is just its dimension. There are other such estimators where the correlation dimension is replaced by the {\em Box counting dimension} or the {\em Kolmogorov capacity}. This list of estimators is far from being all inclusive.

All those estimators work very well---in the sense that they yield with high probability the desired result---as long as we apply them with common sense and under suitable conditions, that is if 
\begin{itemize}
\item[1)] we work at a scale at which the manifold $M$ resembles euclidean space, and
\item[2)] the data set $X$ is large enough and the points therein have been obtained by uniform sampling.
\end{itemize} 
However, amazingle little is known when it comes to quantifying 1) or 2). Indeed, there are classical estimates due to Grassberger \cite{Grassberger}, Procaccia \cite{Procaccia}, and Eckmann-Ruelle \cite{Eckmann-Ruelle} giving absolute lower bounds, in terms of the dimension, for the number of measurements needed for the results to be reliable: Grassberger argues that there are absolute lower bounds at all, Procaccia argues that if the intrindic dimension is $\dim$ then one might need a data set of at least $10^{\dim}$ measurements, while Eckmann-Ruelle suggest a lower bound of the form $C\cdot 10^{\frac 12\dim}$ for some large but undetermined $C$. One can summarize these results as asserting that, when the dimension grows, the minimal cardinality of a data set allowing to compute the dimension grows exponentially---this is indeed already the case (see \cite{Weinberger}) when one just wants to distinguish between spheres of consecutive dimensions. 

The absolute lower bounds for the data size that we just mention do not help however with the following less philosophical question: {\em I suspect, or hypothesize, that my data set is sampled out of a manifold with this or that properties. To what extent can I trust the intrinsic dimension estimation given by this or that estimator?} The only result we know along those lines is due to Niyogi-Smale-Weinberger \cite{NSW}, at least as long as one wants to allow for variable curvature manifolds. In \cite{NSW} the authors give namely  an algorithm to compute the homology of a closed submanifold $M\subset\BR^s$ out of a set $X$ sampled from $M$, and they estimated how large does the data size have to be so that their algorithm has success in at least $90\%$ of the cases---observe that knowing the homology we also know the dimension of the manifold. A problem is that the needed data size is astronomical. For example, the estimate in \cite{NSW} for the number of points needed to be sampled to compute with $90\%$ probability of success the homology of the 4-dimensional Clifford torus $\BT^4=\BS^1\times\BS^1\times\BS^1\times\BS^1$ is 24.967.788 points, that is about 25 million points. 

One might contend that the Niyosi-Smale-Weinberger algorithm computes something much more sophisticated than the dimension, that they are basically learning the whole manifold, and that being able to do that is an overkill if what one wants to do is to estimate its dimension. We agree with that point of view. We will consider the estimator 
\begin{equation}\label{eq corrsum intro}
\dim_{\Corr(\epsilon_1,\epsilon_2)}(X)=\round\left(\frac{\log\vert PX(\epsilon_1)\vert-\log\vert PX(\epsilon_2)\vert}{\log(\epsilon_1)-\log(\epsilon_2)}\right)
\end{equation}
where
\begin{equation}\label{eq close pairs}
PX(\epsilon)=\big\{\{x,y\}\subset X\text{ with }0<\vert x-y\vert\le\epsilon\big\},
\end{equation}
and our goal will be to estimate how large does $X\subset M$ need to be for the estimator \eqref{eq corrsum intro} to have a $90\%$ success rate. It is definitively much smaller. For example, applying \eqref{eq corrsum intro} to randomly sampled data sets in $\BT^4$ we have a $90\%$ rate of sucess, as long as we sample $18.262$ points, that is more than 1300 times less than than before.

Evidently, the value of $\dim_{\Corr(\epsilon_1,\epsilon_2)}(X)$ does not only depend on $X$ but also on the chosen scales. In some sense the goal of this paper is to decide how to choose $\epsilon_1$ and $\epsilon_2$ in such a way that one does not need to sample too many points to be $90\%$ sure that \eqref{eq corrsum intro} returns the correct value. This is the kind of results that we will prove:

\begin{sat}\label{sat humane formula for number of points}
For $d=1,\cdots,10$ let $\epsilon_1$ and $\epsilon_2$ be scales as in the table below. Also, given a closed $d$-dimensional manifold $M\subset\BR^s$ with reach $\tau(M)\ge 1$ let $n$ be also as in the following table:
\begin{table}[H]
\begin{tabular}{|l|l|l|l|}
\hline
$d$ & $\epsilon_1$ & $\epsilon_2$ & n \\ \hline
$1$ & $1.5$ & $0.19$ & $9+21\cdot\vol(M)^{\frac 12}$ \\ \hline
 $2$  & $0.78$  & $0.2$ & $94+58\cdot\vol(M)^{\frac 12}$ \\ \hline
 $3$ & $0.63$ & $0.23$ & $635+146\cdot\vol(M)^{\frac 12}$ \\ \hline
 $4$  & $0.54$& $0.23$ & $2786+392\cdot\vol(M)^{\frac 12}$ \\ \hline
 $5$ & $0.46$ & $0.22$ & $7013+1119\cdot\vol(M)^{\frac 12}$ \\ \hline
 $6$  & $0.4$ & $0.21$ & $13221+3366\cdot\vol(M)^{\frac 12}$ \\ \hline
 $7$ & $0.36$ & $0.21$ & $25138+10644\cdot\vol(M)^{\frac 12}$ \\ \hline
 $8$ & $0.33$ & $0.2$ & $50033+34890\cdot\vol(M)^{\frac 12}$ \\ \hline
 $9$ & $0.31$ & $0.19$ & $63876+119533\cdot\vol(M)^{\frac 12}$ \\ \hline
 $10$ & $0.29$ & $0.18$ & $139412+425554\cdot\vol(M)^{\frac 12}$ \\ \hline
\end{tabular}
\end{table}
Then, if we sample independently and according to the riemannian volume form a subset $X\subset M$ consisting of at least $n$ points, then we have 
$$\dim_{\Corr(\epsilon_1,\epsilon_2)}(X)=d$$
with at least $90\%$ probability.
\end{sat}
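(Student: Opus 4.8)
The estimator returns the integer nearest to the slope $R=\frac{\log\vert PX(\epsilon_1)\vert-\log\vert PX(\epsilon_2)\vert}{\log\epsilon_1-\log\epsilon_2}$, so the plan is to show that $\vert R-d\vert<\tfrac12$ with probability at least $90\%$. I would split the error $R-d$ into a \emph{geometric (bias) part}, coming from the fact that $M$ is only approximately flat at the scales $\epsilon_1,\epsilon_2$, and a \emph{statistical (fluctuation) part}, coming from the deviation of the random counts $\vert PX(\epsilon_i)\vert$ from their expectations.

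For the geometric part, observe that $\vert PX(\epsilon)\vert$ is a $U$-statistic with expectation $\binom{N}{2}p(\epsilon)$, where $N=\vert X\vert$ and $p(\epsilon)=\frac{1}{\vol(M)^2}\int_M\vol\!\big(M\cap B(x,\epsilon)\big)\,d\vol(x)$ is the probability that two independently sampled points lie within Euclidean distance $\epsilon$. Using $\tau(M)\ge 1$ I would bound $\vol(M\cap B(x,\epsilon))$ above and below by explicit multiples of $\epsilon^d$ on the whole scale range of the table, via the reach-controlled fact that near $x$ the manifold is a graph over its tangent space with second fundamental form $\le\tau(M)^{-1}$. Integrating gives $p(\epsilon)=\frac{\omega_d\epsilon^d}{\vol(M)}\big(1+\eta(\epsilon)\big)$ with an explicit bound $\vert\eta(\epsilon)\vert\le\eta_{\max}$ for $\epsilon\in[\epsilon_2,\epsilon_1]$, so the expected slope is $R_0=\frac{\log p(\epsilon_1)-\log p(\epsilon_2)}{\log\epsilon_1-\log\epsilon_2}=d+O\!\big(\eta_{\max}/\log(\epsilon_1/\epsilon_2)\big)$; the bias is small precisely when the two scales are well separated and both small relative to the reach.

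For the statistical part I would estimate $\Var(\vert PX(\epsilon)\vert)$ through the usual $U$-statistic decomposition, the leading contributions being an $N^3$ term governed by $\zeta_1=\Var_x\!\big(\vol(M\cap B(x,\epsilon))/\vol(M)\big)$ and an $N^2$ term governed by $p(\epsilon)$; both are controlled by the same pointwise volume bounds (here $\reach$ bounds the spatial variation $\zeta_1$). A Chebyshev estimate, with the $10\%$ failure budget split between the two scales, then yields simultaneous two-sided multiplicative bounds $\vert PX(\epsilon_i)\vert=\binom{N}{2}p(\epsilon_i)(1\pm\delta_i)$, which, after passing to logarithms with $\vert\log(1\pm\delta)\vert\lesssim\delta$ and dividing by $\log\epsilon_1-\log\epsilon_2$, become an additive perturbation of $R$ on top of the bias $R_0-d$; requiring the total to stay below $\tfrac12$ is the inequality to be solved. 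Since $p(\epsilon)\sim\omega_d\epsilon^d/\vol(M)$, the relative fluctuation behaves like $\big(\tfrac1N+\tfrac{\vol(M)}{N^2\omega_d\epsilon^d}\big)^{1/2}$, and forcing this below the tolerance forces $N$ above a quantity of the form $\tfrac{c_1}{\delta^2}+\tfrac{c_2}{\delta}\sqrt{\vol(M)}$ (using $\sqrt{A+B}\le\sqrt A+\sqrt B$). The constant term reflects the irreducible $1/N$ fluctuation together with the bias requirement, and the $\sqrt{\vol(M)}$ term reflects that, the sampling density being $N/\vol(M)$, one needs $N\gtrsim\vol(M)^{1/2}$ merely to see close pairs at the smaller scale; inserting the tabulated $\epsilon_1,\epsilon_2,d$ and evaluating $\omega_d,\eta_{\max},\zeta_1$ numerically then reproduces the entries $a_d+b_d\vol(M)^{1/2}$.

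I expect the main obstacle to be the geometric input at the \emph{upper} scale. The clean graph-over-the-tangent-space volume comparison is unconditionally available only for $\epsilon<\tau(M)$, whereas the table uses $\epsilon_1$ as large as $1.5$ (for $d=1$) and comparable to $1$ for small $d$, where a Euclidean ball can meet far-away sheets of $M$ and the lower bound on $\vol(M\cap B(x,\epsilon))$ becomes delicate. Keeping $\eta_{\max}$ small there, while simultaneously keeping $\epsilon_2$ large enough that the coefficient $b_d$ of $\vol(M)^{1/2}$ does not explode, is exactly the bias-versus-sample-complexity optimization that fixes the scales in the table, and sharpening the two-sided volume bounds is what ultimately controls both the bias and the constants $a_d,b_d$.
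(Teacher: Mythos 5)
Your statistical architecture is exactly the paper's: $\vert PX(\epsilon)\vert$ is treated as a sum over pairs whose variance splits into an $n^2$ term governed by $p(\epsilon)$ and an $n^3$ term governed by the spatial variance of $x\mapsto\vol(M\cap\BB(x,\epsilon))$ (the paper bounds the latter by Popoviciu's inequality in terms of $V_{\max}/V_{\min}$); a Chebyshev bound transferred to logarithms (Lemma \ref{prop log chebysheff}) with the $10\%$ budget split between the two scales; and a quadratic inequality in $n$ whose solution has precisely the shape $a_d+b_d\vol(M)^{1/2}$, with the constant term coming from the covariance/curvature contribution and the $\vol(M)^{1/2}$ term from the pair-counting fluctuation. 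The scales are then chosen by numerical optimization, as you say.

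The divergence, and the genuine gap, is in the geometric input. You propose Euclidean-with-error asymptotics $p(\epsilon)=\frac{\omega_d\epsilon^d}{\vol(M)}(1+\eta(\epsilon))$ via the graph-over-the-tangent-space picture, and you correctly flag that this is only available for $\epsilon\ll\tau(M)$ and becomes delicate at the tabulated $\epsilon_1$ (up to $1.5$ for $d=1$, $0.78$ for $d=2$); as stated your plan cannot produce the explicit $\eta_{\max}$ needed to certify the table. The paper closes this differently, and never needs $p(\epsilon)\approx\omega_d\epsilon^d$ at all: it only needs two-sided bounds on the \emph{ratio} $\vol(DM(\epsilon_1))/\vol(DM(\epsilon_2))$. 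These come from (i) the metric distortion bound $d_M(x,y)\le 2\arcsin(\vert x-y\vert/2)$, valid for all $\vert x-y\vert<2$ when $\tau(M)\ge 1$, which sandwiches $M\cap\BB(x,r)$ between intrinsic balls and in particular rules out the far-away-sheet phenomenon you worry about; and (ii) the curvature pinching $-2\le\kappa_M\le 1$ together with $\inj(M)\ge\pi/2$, fed into Bishop--Gromov for the upper bound and a CAT$(1)$ radial-projection argument for the lower bound, yielding comparison-space ratios (spherical below, rescaled hyperbolic above) as in Theorem \ref{sat pain in the ass}. This is what makes the slope provably lie in $(d-\frac12+\gap_d,d+\frac12-\gap_d)$ with a computable margin, and what makes the absolute lower bound $V_{\min}(\epsilon)\ge\vol(B^{\BS^d}(\epsilon))$ and the ratio bound $V_{\max}/V_{\min}\le\CR(\epsilon)$ explicit enough to evaluate the table. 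If you want to complete your version, you should replace the graph-chart estimate by these reach-based distortion and comparison inequalities; otherwise the bias term and the constants $a_d,b_d$ remain uncontrolled at the scales actually used.
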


Here the {\em reach} $\tau(M)$ (see Definition \ref{def reach}), or in the terminology of \cite{NSW} the {\em condition number}, is taken as a measure for the local regularity of the submanifold $M$. It is evident that in order to have specific bounds for the needed data size, we do need to have some a priori control on the local geometry: otherwise we could have a 1-dimesional submanifold so interwoven that it looks as a $d$-dimensional manifold for some $d\ge 2$. Taking the reach as a measure to quantitify to which extent does a submanifold $M\subset\BR^s$ resemble euclidean space seems to be actually pretty common in the field of manifold learning \cite{Aamari,NSW,Fefferman,Boissonat}.

After discussing briefly the correlation dimension and the estimator \eqref{eq corrsum intro} in Section \ref{sec estimators} we discuss some aspects of the geometry of the reach in Section \ref{sec geometry}. We will mostly care about the volume of the thick diagonal 
$$DM(\epsilon)=\{(x,y)\in M\times M\text{ with }\vert x-y\vert\le \epsilon\}$$
For example, in Theorem \ref{sat pain in the ass} we use the Bishop-Gromov theorem and the properties of CAT(1)-spaces to give upper and lower bounds for the ratio $\frac{\vol(DM(\epsilon_1))}{\vol(DM(\epsilon_2))}$ for sufficiently small $\epsilon_1>\epsilon_2$ positive:

\begin{sat}\label{sat pain in the ass}
Suppose that $M\subset\BR^s$ is a $d$-dimensional ($d\ge 1$) closed submanifold with reach $\tau(M)\ge 1$. Then we have
$$\frac{\frac{\epsilon_1}{2}}{\arcsin(\frac {\epsilon_2}2)}\!\left(\!\frac{\sin(\epsilon_1)}{\sin(2\arcsin\frac {\epsilon_2}2)}\!\right)^{d-1}\!\!\!\!\le\frac{\vol(DM(\epsilon_1))}{\vol(DM(\epsilon_2))}\!\le \frac{\int_0^{\sqrt 2\cdot 2\cdot\arcsin(\frac {\epsilon_1}2)}\!\sinh^{d-1}(t)dt}{\int_0^{\sqrt 2\cdot \epsilon_2}\sinh^{d-1}(t)dt}$$
for any two $1>\epsilon_1>\epsilon_2>0$.
\end{sat}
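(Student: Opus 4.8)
The plan is to reduce the ratio of thick-diagonal volumes to a single pointwise ball-volume estimate on $M$ and then feed that estimate into comparison geometry. First I would write, for the closed Euclidean ball $B(x,\epsilon)$ and
$$v_x(\epsilon):=\vol\big(M\cap B(x,\epsilon)\big),$$
the Fubini identity $\vol(DM(\epsilon))=\int_M v_x(\epsilon)\,dx$. Consequently, if one produces bounds $a(\epsilon)\le v_x(\epsilon)\le b(\epsilon)$ that are uniform in the basepoint $x$, then
$$\frac{a(\epsilon_1)}{b(\epsilon_2)}\le\frac{\vol(DM(\epsilon_1))}{\vol(DM(\epsilon_2))}\le\frac{b(\epsilon_1)}{a(\epsilon_2)},$$
so the whole problem becomes controlling, uniformly in $x$, the volume of the piece of $M$ trapped in a Euclidean $\epsilon$-ball.

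\textbf{From extrinsic to intrinsic balls, and curvature.} The reach turns Euclidean proximity into intrinsic proximity: for $\tau(M)\ge1$ the chord/arc comparison (extremal on the unit circle, where a chord of length $c$ subtends an arc $2\arcsin\frac c2$) gives $|x-y|\le d_M(x,y)\le 2\arcsin\frac{|x-y|}2$ for $|x-y|<2$. Writing $B_M(x,r)$ for the intrinsic geodesic ball, this yields the sandwich $B_M(x,\epsilon)\subseteq M\cap B(x,\epsilon)\subseteq B_M(x,2\arcsin\tfrac\epsilon2)$, hence $\vol B_M(x,\epsilon)\le v_x(\epsilon)\le\vol B_M(x,2\arcsin\tfrac\epsilon2)$. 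In parallel I would use the curvature bounds forced by the reach: a unit-speed geodesic of $M$ has ambient acceleration $\le1/\tau(M)\le1$, so the second fundamental form $h$ obeys $|h(u,u)|\le1$ for unit $u$, and the Gauss equation $K=\langle h(u,u),h(v,v)\rangle-|h(u,v)|^2$ then gives $\mathrm{sec}\le1$ and $\mathrm{sec}\ge-2$ (the two negative contributions, each of size $1$, producing the constant $2$), whence $\mathrm{Ric}\ge-2(d-1)$. Moreover the reach bounds the injectivity radius from below by more than the radii that occur here, so the comparisons below are legitimate.

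\textbf{The two bounds.} For the \emph{upper} bound only Bishop--Gromov is needed: since $\mathrm{Ric}\ge-2(d-1)$, the quotient $\vol B_M(x,r)\big/\int_0^r\sinh^{d-1}(\sqrt2\,t)\,dt$ is non-increasing in $r$; applying this with $r_1=2\arcsin\frac{\epsilon_1}2>\epsilon_2=r_2$ and the sandwich gives
$$\frac{v_x(\epsilon_1)}{v_x(\epsilon_2)}\le\frac{\vol B_M(x,r_1)}{\vol B_M(x,r_2)}\le\frac{\int_0^{r_1}\sinh^{d-1}(\sqrt2\,t)\,dt}{\int_0^{r_2}\sinh^{d-1}(\sqrt2\,t)\,dt}=\frac{\int_0^{\sqrt2\,r_1}\sinh^{d-1}(u)\,du}{\int_0^{\sqrt2\,r_2}\sinh^{d-1}(u)\,du},$$
which is exactly the claimed upper bound. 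For the \emph{lower} bound I would use the dual monotonicity coming from $\mathrm{sec}\le1$ (Günther, in its monotone form, valid within the injectivity radius): the quotient $\vol B_M(x,r)\big/\int_0^r\sin^{d-1}(t)\,dt$ is non-decreasing. Setting $r_2':=2\arcsin\frac{\epsilon_2}2$, if $\epsilon_1\ge r_2'$ then together with the sandwich
$$\frac{v_x(\epsilon_1)}{v_x(\epsilon_2)}\ge\frac{\vol B_M(x,\epsilon_1)}{\vol B_M(x,r_2')}\ge\frac{\int_0^{\epsilon_1}\sin^{d-1}(t)\,dt}{\int_0^{r_2'}\sin^{d-1}(t)\,dt};$$
if instead $\epsilon_1<r_2'$, the target lower bound is a product of factors each $\le1$, hence $\le1\le v_x(\epsilon_1)/v_x(\epsilon_2)$, and there is nothing to prove.

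\textbf{The remaining one-variable inequality, and the main obstacle.} It then suffices to establish the elementary inequality
$$\frac{\int_0^{a}\sin^{d-1}(t)\,dt}{\int_0^{b}\sin^{d-1}(t)\,dt}\ \ge\ \frac ab\Big(\frac{\sin a}{\sin b}\Big)^{d-1}\qquad(0<b\le a<\tfrac\pi2),$$
since with $a=\epsilon_1$ and $b=r_2'=2\arcsin\frac{\epsilon_2}2$ the right-hand side is precisely $\frac{\epsilon_1/2}{\arcsin(\epsilon_2/2)}\big(\frac{\sin\epsilon_1}{\sin(2\arcsin\frac{\epsilon_2}2)}\big)^{d-1}$. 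This is equivalent to the monotonicity in $R$ of $R\mapsto\frac{1}{R\sin^{d-1}R}\int_0^R\sin^{d-1}(t)\,dt=\int_0^1\big(\tfrac{\sin(Rs)}{\sin R}\big)^{d-1}ds$, which in turn reduces to the standard fact that $x\mapsto x\cot x$ is decreasing on $(0,\pi)$. The genuinely delicate points I expect to absorb most of the work are not this one-variable lemma but the geometric bookkeeping of the previous steps: extracting the \emph{sharp, codimension-free} constants $\mathrm{sec}\le1$ and $\mathrm{Ric}\ge-2(d-1)$ from $\tau(M)\ge1$ via the Gauss equation, and verifying that the reach guarantees enough injectivity radius (and the absence of cut-locus contributions) for the monotone Günther comparison on every ball that appears. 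Once these are secured, the two comparisons and the trigonometric lemma assemble directly into the statement.
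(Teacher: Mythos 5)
Your proposal is correct, and its skeleton -- the Fubini identity $\vol(DM(\epsilon))=\int_M v_x(\epsilon)\,dx$ reducing everything to uniform bounds on the pointwise ratio $v_x(\epsilon_1)/v_x(\epsilon_2)$, the chord/arc sandwich $B^M(x,\epsilon)\subset M\cap\BB(x,\epsilon)\subset B^M(x,2\arcsin\frac\epsilon2)$, the pinching $-2\le\kappa_M\le 1$, and Bishop--Gromov for the upper bound -- coincides with the paper's (Theorem \ref{sat pain in the ass} via Corollary \ref{kor volume of balls} and Proposition \ref{prop volume of balls}). Where you genuinely diverge is the lower bound on $\vol(B^M(x,R))/\vol(B^M(x,r))$. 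The paper exploits $\kappa_M\le 1$ through the CAT($1$) property: the radial projection $S^M(x,R)\to S^M(x,r)$ is $\frac{\sin r}{\sin R}$-Lipschitz, and the co-area formula then yields $\frac Rr\big(\frac{\sin R}{\sin r}\big)^{d-1}$ directly. You instead invoke the monotone (relative) form of G\"unther's comparison -- $r\mapsto\vol(B^M(x,r))/\int_0^r\sin^{d-1}t\,dt$ non-decreasing below the injectivity radius -- which gives the a priori sharper intermediate bound $\vol(B^{\BS^d}(R))/\vol(B^{\BS^d}(r))$, and then relax it to the stated constant via the one-variable inequality. Amusingly, that relaxation rests on the monotonicity of $s\mapsto\sin(Ts)/\sin(s)$ for $T\in(0,1)$, which is exactly the elementary fact at the heart of the paper's co-area computation, so the two proofs share their analytic core and differ only in which comparison-geometry package delivers it. Two small points to secure: the injectivity-radius hypothesis you defer is supplied by the paper's Corollary \ref{kor inj conv} ($\inj(M)\ge\frac\pi2$, while all radii occurring here are at most $2\arcsin\frac12=\frac\pi3$); and your case split at $\epsilon_1<2\arcsin\frac{\epsilon_2}2$, where the claimed lower bound drops below $1\le v_x(\epsilon_1)/v_x(\epsilon_2)$, is a clean way to handle a configuration the paper's projection argument absorbs without comment.
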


In Section \ref{sec: prob} we come then to core of the present pamphlet. The reason why we care about the volume of the thick diagonal is that for $X\subset M$ we have
$$\frac{\log\vert PX(\epsilon_1)\vert-\log\vert PX(\epsilon_2)\vert}{\log(\epsilon_1)-\log(\epsilon_2)}\sim\frac{\log(\vol(DM(\epsilon_1)))-\log(\vol(DM(\epsilon_2)))}{\log(\epsilon_1)-\log(\epsilon_2)}$$
with large probability, at least if $X$ has been obtained by independently sampling a large number of points according to the riemannian measure on $M$. Basically the goal of Section \ref{sec: prob}, or maybe even the goal of this paper, is to find $\epsilon_1,\epsilon_2>0$ so that a relatively small set $X$ is such that with high probability the left side lies in the interval $(d-\frac 12,d+\frac 12)$. We prove Theorem \ref{sat humane formula for number of points} above in Section \ref{sec: prob}. The scales in the table in the theorem are obtained numerically---the process is also described in Section \ref{sec: prob}, and a computer implementation can be found in \cite{program}.

\subsection*{Heuristical bounds}

It is more or less evident that the bounds given by Theorem \ref{sat humane formula for number of points} are far from being sharp. In Section \ref{sec: heur} we add a heuristical discussion of what could be, in practice, more realistic bounds. The starting point is that the number of sampled points should not be what determines how reliable is the obtained result, but rather the number $\vert PX(\epsilon_1)\vert$ of pairs at our larger scale $\epsilon_1>\epsilon_2>0$. Arguing as if 
\begin{enumerate}
\item at our scales all balls in $M$ were euclidean, and
\item the distances between the two points in pairs as in \eqref{eq close pairs} were independent,
\end{enumerate}
we get that, with $N$ as in Table \ref{table heur bounds intro}, it would suffice to have a data set $X\subset M$ with at least $N$ pairs as in \eqref{eq close pairs} for \eqref{eq corrsum intro} to give the correct answer in about $90\%$ (resp. $70\%$) of the cases.
	\begin{table}[h]
		\begin{tabular}{|l|l|l|l|l|}
			\hline
			$d$ & $\epsilon_1$ & $\epsilon_2$ & {\bf $N$ for 90\%} & {\bf $N$ for 70\%}  \\ \hline
1 & $1.5$ & $0.19$ & $30$ & $10$ \\ \hline
 2  & $0.78$  & $0.2$ & $122$ & $40$ \\ \hline
 3 & $0.63$ & $0.23$ & $249$ & $111$ \\ \hline
 4  & $0.54$& $0.23$ & $516$ & $238$ \\ \hline
 5 & $0.46$ & $0.22$ & $878$ & $360$ \\ \hline
 6  & $0.4$ & $0.21$ & $1329$ & $554$ \\ \hline
 7 & $0.36$ & $0.21$ & $1719$ & $698$ \\ \hline
 8 & $0.33$ & $0.2$ & $2481$ & $1070$ \\ \hline
 9 & $0.31$ & $0.19$ & $3900$ & $1604$ \\ \hline
 10 & $0.29$ & $0.18$ & $5849$ & $2414$ \\ \hline
		\end{tabular}
\medskip
\caption{Heuristic bound $N$ for how large should the cardinality of $PX(\epsilon_1)$ at least be to have $90\%$ (resp. $70\%$) rate of success when using \eqref{eq corrsum intro}.}\label{table heur bounds intro}
	\end{table}

 Note that the assumptions (1) and (2) are not that outlandish, at least if $\epsilon_1$ is small and if the number of pairs is small when compared to the cardinality of $X$. In any case, for what it is worth, numerical simulations (see Table \ref{table expe}) seem to support the values given in Table \ref{table heur bounds intro}. In particular, our numerical simulations also indicate that if $\vert PX(\epsilon_1)\vert$ is less or equal than the value in the right column in Table \ref{table heur bounds intro}, then we should count with about a $30\%$ failure rate when we use \eqref{eq corrsum intro}. 

\begin{bem}	
As a bigger value of $\vert PX(\varepsilon_1)\vert$ results in better performance, it is a natural idea to try to maximize it. This can be done in two ways. First, we can try to obtain more data to increase the number $n$, but as this is not always possible in concrete situations, the second solution is the increase the scale $\varepsilon_1$. The downside of this second method is that considering bigger balls increases the effect of the curvature. One must carefully balance those two effects.
\end{bem}

Thinking on the reliability of \eqref{eq corrsum intro} in terms of the number of pairs has the huge advantage that one does not need to have any a priori knowledge of the volume of the manifold. It is however not clear how different are the heuristic bound from Table \ref{table heur bounds intro} and the formal bound given in Theorem \ref{sat humane formula for number of points}. To be able to compare both bounds we note that, always under the assumption that all $\epsilon_1$ balls in our $d$-manifold $M$ are euclidean, then if we sample a set $X\subset M$ with $n$ points then we expect to have  
$$\vert PX(\epsilon_1)\vert=\frac{n(n-1)}2\cdot\frac{\vol(B^{\BR^d}(\epsilon_1))}{\vol(M)}$$
pairs of points. Using this relation we get for example that, using the heuristic bound, it would suffice to sample $1958$ points from the torus $\BT^4$ for \eqref{eq corrsum intro} to be $90\%$ of the time correct. See Table \ref{table comparisson ns} for more on the comparisson between the heuristic bound and the bound in Theorem \ref{sat humane formula for number of points} for the number of points that suffice to have $90\%$ success rate when using \eqref{eq corrsum intro}. This is, once again, supported by numerical experiments---see Table \ref{table expe 2}. 
\medskip

Note now that thinking of the applicability of \eqref{eq corrsum intro} in terms of the cardinality of $PX(\epsilon_1)$ leads to an implementation of \eqref{eq corrsum intro} which could be applied to data sets $X$ sampled from a manifold $M$ whose reach we ignore. The basic idea is the following: 
\begin{quote}
If we want to check if $M$ has dimension $d$ then we take $\epsilon_1$ minimal so that $\vert PC(\epsilon_1)\vert$ is as large as Table \ref{table heur bounds intro} asks for in dimension $d$, then we choose $\epsilon_2$ so that the ratio $\frac{\epsilon_1}{\epsilon_2}$ is as in Table \ref{table heur bounds intro}), and then compute $\dim_{\Corr(\epsilon_1,\epsilon_2)}(X)$.
\end{quote}
The preceeding discussion implies that for sufficiently rich synthetic data sets this algorithm has a reliability of about 90\%. For the sake of completeness we decided to test it also on data sets each consisting in 200 grayscale pictures of the 3D-model \textit{Suzanne} (see Figure \ref{monkey}) obtained by randomly choosing 1, 2 and 3 Euler angles---the pictures are 64 by 64 pixels large and can thus be represented as points in $\mathbb{R}^{4096}$. See Table \ref{table monkey} for the obtained results.
\begin{figure}[H]
	\centering
	\includegraphics[width=10cm]{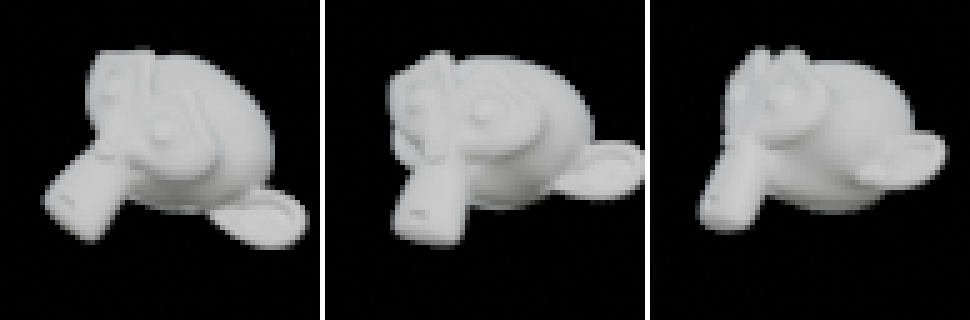}
	\caption{3 sample pictures from the data sets.}
	\label{monkey}
\end{figure}

\subsection*{Comparisson with other estimators}
In Section \ref{sec: compare} we compare briefly the performance of \eqref{eq corrsum intro} with that of a few other dimension estimators: reading the correlation dimension via a log-log plot, ANOVA, and local PCA. The basic observation is that, at least if the volume of $M$ greater or equal to that of the $d$-dimensional Clifford torus and if we take sets of cardinality close to the heuristic bound proposed either in Table \ref{table heur bounds intro} or in Table \ref{table comparisson ns}, then \eqref{eq corrsum intro} seems to perform better than ANOVA and local PCA, at least with the specific implementations we proposed. It is however harder to compare any of these estimators with the incarnation of \eqref{eq corrsum intro} using a log-log plot: it is namely unclear how to meaningfully quantify what does it mean for the non-constant slope of a curve to look constant. It thus only seems to make sense to test the log-log technique numerically, but it is also unclear how to formally do this: while one might well print diagrams for 100 randomly chosen random subsets of a given manifold, it is not clear how can one formally decide what any given diagram is suggesting to us. Still, the limited experiments we conducted lead to the conclussion that the log-log plot method is pretty reliable for relatively small data sets.

\subsection*{Acknowledgements}
The second author would like to thank Kaie Kubjas for getting him interested in this topic.

\section{The estimator}\label{sec estimators}

In this section we recall what is the correlation dimension and how it can be approximated to obtain dimension estimators, but first we introduce some notation that we will use through out the paper. First, distances in eulidean space will be denoted either by $d_{\BR^n}(x,y)$ or by $\vert x-y\vert$. If $x$ and $y$ are points in a Riemannian manifold $M$ then $d_M(x,y)$ is their distance with respect to the Riemannian metric. The ball in $M$ of radius $R$ centered at some point $x\in M$ is denoted by $B^M(x,R)$, although sometimes, when there is no risk of confussion,  we drop the subscript writing thus simply $B(x,R)$. Again, if there is no risk of confussion with the dimension, balls in the ambient euclidean space $\BR^s$ will be denoted by blackboard bold, that is $\BB(x,R)=B^{\BR^s}(x,R)$. Finally, the intersection of balls in the ambient space $\BR^s$ with the manifold $M$ will be denoted by
$$\BB^M(x,R)=M\cap\BB(x,R)$$
With this notation in place, we turn our attention to the correlation dimension and the estimator \eqref{eq corrsum intro}.

As we mentioned earlier, there are plenty of dimension estimators (see \cite{Camastra,CamastraS} and \cite[Chapter 3]{LVbook}). In the language of \cite{CamastraS}, \eqref{eq corrsum intro} is a {\em fractal-based estimator}. Fractal-based because what one is aiming at, is computing a dimension which makes sense for fractal objects---in this case the correlation dimension. Recall that the {\em correlation integral} at scale $\epsilon$ of a Borel measure $\mu$ on $\BR^s$ is the integral
$$C(\mu,\epsilon)=\int\mu(\BB(x,\epsilon))\, d\mu(x)$$
The {\em upper correlation dimension} and {\em lower correlation dimension} are then defined as 
$$D^+(\mu)=\limsup_{\epsilon\to 0}\frac{\log(C(\mu,\epsilon))}{\log(\epsilon)}\text{ and }D^-(\mu)=\liminf_{\epsilon\to 0}\frac{\log(C(\mu,\epsilon))}{\log(\epsilon)}.$$
When both of them agree, then one refers to 
$$D(\mu)=D^+(\mu)=D^-(\mu)$$
as the {\em correlation dimension} of $\mu$. There are numerous situations of interest in dynamics \cite{Simpelaere} in which the correlation dimension of a measure exists. It is also trivial that it exists if $\mu$ is a smooth measure whose support is a submanifold of euclidean space. Since this is the case in which we will find ourselves, we state this fact as a lemma:

\begin{lem}
If $\mu$ is a smooth finite measure of full support of a $d$-dimensional submanifold $M\subset\BR^s$, then the correlation dimension exists and we have $D(\mu)=d$.\qed
\end{lem}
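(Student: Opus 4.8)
The plan is to reduce the statement to a two-sided estimate on the correlation integral and then establish that estimate by purely local geometry. Writing $\mu=\rho\,d\vol_M$ for a smooth density $\rho$---full support forcing $\rho>0$---I first observe that since $\log\epsilon\to-\infty$ as $\epsilon\to0$, it suffices to exhibit constants $0<A\le B<\infty$ and $\epsilon_0>0$ with
$$A\,\epsilon^d\le C(\mu,\epsilon)\le B\,\epsilon^d\qquad\text{for all }0<\epsilon\le\epsilon_0.$$
Indeed, taking logarithms and dividing by $\log\epsilon<0$ (which reverses the inequalities), the quantity $\frac{\log C(\mu,\epsilon)}{\log\epsilon}$ is then squeezed between $d+\frac{\log A}{\log\epsilon}$ and $d+\frac{\log B}{\log\epsilon}$, both of which tend to $d$. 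Since this controls $\limsup$ and $\liminf$ simultaneously, it yields $D^+(\mu)=D^-(\mu)=d$ at once.

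The heart of the matter is thus a uniform comparison of $\mu(\BB(x,\epsilon))$ with $\epsilon^d$, and this is where I use that $M$ is closed, so that it has bounded geometry. On one hand the chord-length bound $|x-y|\le d_M(x,y)$ gives $B^M(x,\epsilon)\subseteq\BB^M(x,\epsilon)$; on the other hand smoothness of $M$ yields $\BB^M(x,\epsilon)\subseteq B^M(x,2\epsilon)$ for $\epsilon$ small, uniformly in $x$. Combining these inclusions with volume comparison for geodesic balls---Bishop--Gromov for the upper bound and G\"unther's inequality (or simply the uniform lower bounds on injectivity radius and curvature available on a compact manifold) for the lower bound---produces constants $c_1,c_2>0$ and $\epsilon_0>0$ such that
$$c_1\,\epsilon^d\le\vol\big(\BB^M(x,\epsilon)\big)\le c_2\,\epsilon^d\qquad\text{for all }x\in M,\ 0<\epsilon\le\epsilon_0.$$

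Once this uniform estimate is in hand the rest is immediate. Since $\rho$ is continuous and positive on the compact set $M$ it satisfies $0<\rho_{\min}\le\rho\le\rho_{\max}<\infty$, whence $\mu(\BB(x,\epsilon))=\int_{\BB^M(x,\epsilon)}\rho\,d\vol_M$ lies between $\rho_{\min}c_1\epsilon^d$ and $\rho_{\max}c_2\epsilon^d$ for every $x$. Integrating against $d\mu$ and using $0<\mu(M)<\infty$ then gives the desired $A\,\epsilon^d\le C(\mu,\epsilon)\le B\,\epsilon^d$ with $A=\rho_{\min}c_1\,\mu(M)$ and $B=\rho_{\max}c_2\,\mu(M)$, completing the squeeze.

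I expect the only genuine obstacle to be the uniform geometric estimate of the second step; everything else is bookkeeping. The uniformity in $x$ is exactly where compactness of $M$ enters---on a general non-compact submanifold the injectivity radius or curvature could degenerate and the constants $c_1,c_2,\epsilon_0$ might fail to exist. In that generality one would instead argue pointwise that $\epsilon^{-d}\mu(\BB(x,\epsilon))\to\omega_d\,\rho(x)$ and pass to the limit under the integral sign by dominated convergence, which again requires an integrable domination and hence some uniform control.
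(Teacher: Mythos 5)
The paper does not actually prove this lemma: it is declared ``trivial'' and stated with an immediate \verb|\qed|, so there is no argument to compare yours against. Your proof is correct and is the natural way to fill in the details: the squeeze $A\epsilon^d\le C(\mu,\epsilon)\le B\epsilon^d$ reduces everything to the uniform two-sided comparison $c_1\epsilon^d\le\vol(\BB^M(x,\epsilon))\le c_2\epsilon^d$, and the chain of inclusions $B^M(x,\epsilon)\subseteq\BB^M(x,\epsilon)\subseteq B^M(x,2\epsilon)$ together with Bishop--Gromov and G\"unther delivers exactly that once curvature and injectivity radius are uniformly controlled. The only caveat is the one you already flag yourself: the lemma as printed says only ``submanifold,'' and your uniform constants genuinely require compactness (equivalently, positive reach); since the paper works throughout with closed submanifolds of $\BR^s$, this is the intended hypothesis and your proof is complete under it. Your closing remark that the non-compact case would need a pointwise limit $\epsilon^{-d}\mu(\BB(x,\epsilon))\to\omega_d\rho(x)$ plus a dominated-convergence argument is an accurate diagnosis of where the general statement becomes delicate.
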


In \cite{Grassberger-Procaccia} Grassberger and Procaccia note that if $\epsilon>0$ is small enough and if we have a set $X=(x_1,\dots,x_n)$ consisting of $n$ points ($n$ very large) sampled independently with respect to the measure $\mu$ then $D(\mu)$ is approximated by
\begin{equation}\label{eq GP}
\dim_{GP}(X,\epsilon)=\frac{\log\left(\frac{\vert DX(\epsilon)\vert}{n(n-1)}\right)}{\log(\epsilon)}
\end{equation}
where 
\begin{equation}\label{eq DX}
DX(\epsilon)=\{(x_i,x_j)\in X\times X\text{ with }i\neq j\text{ and }\vert x_i-x_j\vert\le\epsilon\}
\end{equation}
is the set of (ordered) pairs of points in $X$ within $\epsilon$ of each other. Grassberger and Procaccia propopose \eqref{eq GP} as an estimator for the intrinsic dimension of the data set $X$---variants are discussed by Takens \cite{Takens83, Takens85} and Theiler \cite{Theiler90}. We will however focus here on the version from \cite{LVbook}, or more precisely the quantity 
\begin{equation}\label{eq corrsum sec2}
\dim_{\Corr(\epsilon_1,\epsilon_2)}(X)=\round\left(\frac{\log\vert DX(\epsilon_1)\vert-\log\vert DX(\epsilon_2)\vert}{\log\epsilon_1-\log\epsilon_2}\right)
\end{equation}
Note that this expression is nothing other than \eqref{eq corrsum intro} above.

A problem when using \eqref{eq corrsum sec2} as an estimator, or for that matter when we use \eqref{eq GP}, is that we have to pick up the appropriate scales. In practial implementations, this is often by-passed by taking many possible scales $\epsilon_1<\epsilon_2<\dots<\epsilon_k$, computing $\vert DX(\epsilon_i)\vert$ for each one of those scales, plotting the result in a log-log-plot and choosing what looks to us as the slope at some region where the slope looks constant. In practice, at least when tested on synthetic data, the log-log plot method is surprisingly effective, but it is very unclear how one can get out of that a formal statistical test, or how can one evaluate how much confidence can one have on the obtained number. Theorem \ref{sat humane formula for number of points} from the introduction gives such bounds, at least if the manifold we are working with is sufficiently regular, in the sense of having {\em reach $\tau(M)\ge 1$}. We discuss a few aspects of the geometry of the reach in the next section.

\section{Some Geometry}\label{sec geometry}
In this section we recall a few facts about the geometry of reach-1 submanifolds $M$ of euclidean space. Combining these facts with standard arguments from Riemannian geometry we give upper and lower bounds for the volume of the $\epsilon$ fat diagonal in $M\times M$.

\subsection{Reach-1 manifolds}
We start recalling the definition of the {\em reach} of a closed subset of euclidean space:

\begin{defi}\label{def reach}
The {\em reach} $\tau(S)$ of a closed subset $S\subset\BR^s$ is the supremum of those $T\ge 0$ with the property that for every $x\in\BR^s$ with $d_{\BR^s}(x,S)\le T$ there is a unique point in $S$ closest to $x$. 
\end{defi}

Since the reach was introduced by Federer in \cite{Federer curvature measures}, it has proved to be a useful notion. Indeed, it follows from the very definition that positive reach sets, that is sets $S$ with $\tau(S)>0$, have some neighborhood $\CN(S)$ on which the closest point projection $\pi:\CN(S)\to S$ is well-defined---the existence of such a projection is enough to show that sets of positive reach share many regularity properties with convex sets.

Here we will be working from the very beginning with very regular objects, namely closed smooth submanifolds $M\subset\BR^s$ of euclidean space. In this setting the reach, or rather a lower bound for the reach, helps to quantify how distorted is the inner geometry of $M$ with respect to that of the ambient euclidean space. We summarize what we will need in the following proposition:

\begin{prop}\label{prop basic geometry}
Let $M\subset\BR^s$ be a closed submanifold with reach $\tau(M)\ge 1$. Then we have:
\begin{enumerate}
\item $d_M(x,y)\le 2\arcsin\left(\frac{\vert x-y\vert}{2}\right)$ for any two $x,y\in M$ with $\vert x-y\vert<2$. 
\item The set $\BB^M(x,r)=M\cap\BB(x,r)$ is geodesically convex for any $x\in\BR^s$ and any $r<1$. 
\item We have $\angle\big(\gamma'(0),\gamma'(t)\big)\le d_M(\gamma(0),\gamma(\ell))$ for every geodesic $\gamma:[0,\ell]\to M$ parametrized by arc length. 
\item The manifold $M$ has sectional curvature pinched by $-2\le\kappa_M\le 1$.
\end{enumerate}
\end{prop}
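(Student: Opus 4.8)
My proof would build on two standard consequences of the reach bound, both essentially due to Federer \cite{Federer curvature measures} (see also \cite{NSW}). First, a unit-speed geodesic $\gamma$ of $M$ has ambient acceleration $\vert\gamma''(s)\vert\le 1/\tau(M)\le 1$; equivalently the second fundamental form $\mathrm{II}$ satisfies $\vert\mathrm{II}(u,u)\vert\le 1$ for every unit tangent vector $u$. Second, one has the quadratic normal-displacement estimate $\vert(z-y)^{\perp}\vert\le\vert z-y\vert^2/(2\tau(M))\le\vert z-y\vert^2/2$ for all $y,z\in M$, where $(\cdot)^{\perp}$ denotes the orthogonal projection onto the normal space $N_yM$. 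Assertion (3) is then immediate: since $\gamma$ has unit speed, $\gamma'(s)$ moves on the unit sphere with speed $\vert\gamma''(s)\vert\le 1$, so the angle $\angle(\gamma'(0),\gamma'(t))$, being the spherical distance between $\gamma'(0)$ and $\gamma'(t)$, is at most $\int_0^t\vert\gamma''(s)\vert\,ds\le t$, which is the asserted bound (and equals $d_M(\gamma(0),\gamma(t))$ along a minimizing geodesic).

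For assertion (1), let $\gamma\colon[0,\ell]\to M$ be a minimizing geodesic from $x$ to $y$, so $\ell=d_M(x,y)$, and let $m=\ell/2$ be its midpoint. By (3) we have $\angle(\gamma'(s),\gamma'(m))\le\vert s-m\vert\le \ell/2$, so projecting the chord onto the midpoint tangent and using that $\cos$ is decreasing on $[0,\pi]$ gives
\[\vert x-y\vert\ge\langle y-x,\gamma'(m)\rangle=\int_0^\ell\cos\angle(\gamma'(s),\gamma'(m))\,ds\ge\int_0^\ell\cos(s-m)\,ds=2\sin(\ell/2).\]
Hence $\sin(\ell/2)\le\vert x-y\vert/2$, and once one knows that the minimizing geodesic has length $\ell<\pi$ (so that $\ell/2$ lies where $\sin$ is invertible) this rearranges to $\ell\le 2\arcsin(\vert x-y\vert/2)$. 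The one point needing care is precisely this a priori bound $\ell<\pi$ for $\vert x-y\vert<2$; this is where having control on the \emph{reach}, and not merely on the curvature, enters, since it controls the injectivity radius and the length of short geodesic loops.

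Assertion (4) is the Gauss equation combined with the bound on $\mathrm{II}$. For an orthonormal pair $u,v\in T_pM$, flatness of $\BR^s$ gives $\kappa(u,v)=\langle\mathrm{II}(u,u),\mathrm{II}(v,v)\rangle-\vert\mathrm{II}(u,v)\vert^2$. The upper bound is direct: $\kappa(u,v)\le\vert\mathrm{II}(u,u)\vert\,\vert\mathrm{II}(v,v)\vert\le 1$. For the lower bound I first upgrade the diagonal estimate to an off-diagonal one by polarization: with the unit vectors $w_{\pm}=(u\pm v)/\sqrt 2$ one has $\mathrm{II}(u,v)=\tfrac12(\mathrm{II}(w_+,w_+)-\mathrm{II}(w_-,w_-))$, so $\vert\mathrm{II}(u,v)\vert\le 1$. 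Therefore $\kappa(u,v)\ge-\vert\mathrm{II}(u,u)\vert\,\vert\mathrm{II}(v,v)\vert-\vert\mathrm{II}(u,v)\vert^2\ge-1-1=-2$.

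The heart of the matter, and the step I expect to be the main obstacle, is assertion (2). Fix $x\in\BR^s$ and $r<1$, take $p,q\in\BB^M(x,r)$ and a minimizing geodesic $\gamma\colon[0,\ell]\to M$ between them; the goal is to show that $g(s)=\vert\gamma(s)-x\vert^2$ stays $\le r^2$. Differentiating twice and using that $\gamma''(s)$ is normal to $M$ at $\gamma(s)$ yields $g''(s)=2+2\langle(\gamma(s)-x)^{\perp},\gamma''(s)\rangle$, and the normal-displacement estimate together with $\vert\gamma''\vert\le 1$ shows that $g$ is strictly convex wherever $\vert\gamma(s)-x\vert$ stays below the reach. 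Consequently $g$ can have no interior local maximum of value less than $1$, so either $\max g\le r^2$ (which is exactly what we want) or $\gamma$ attains ambient distance $\ge 1$ from $x$. Excluding this second alternative is the delicate part: one combines the length bound $\ell\le 2\arcsin(r)<\pi$ furnished by (1) with the minimality of $\gamma$ to argue that a geodesic joining two points both within $r<1$ of $x$ cannot bulge out past the reach. I would organize this exclusion around the rolling-ball reformulation of the reach, noting that at any boundary contact the curvature of $\gamma$ would have to exceed $1$, contradicting $\vert\gamma''\vert\le 1$, and I expect the careful bookkeeping of the range of $s$ to be the most technical portion of the whole proposition.
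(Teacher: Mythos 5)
The paper itself offers no proof of this proposition: it cites Lemma 3, Corollary 1 and Lemma 5 of \cite{Boissonat} for claims (1)--(3) and Proposition A.1 (iii) of \cite{Aamari} for claim (4). Any self-contained argument is therefore a different route from the paper's. Your treatment of (3) and (4) is complete and correct: the bound $\vert\gamma''\vert\le 1/\tau\le 1$ on the ambient acceleration of unit-speed geodesics is the standard consequence of the reach bound, the observation that the tangent indicatrix is a spherical curve of speed $\vert\gamma''\vert$ gives (3), and the Gauss equation together with the polarization trick gives exactly the asymmetric pinching $-2\le\kappa_M\le 1$.

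The genuine gaps are in (1) and (2), and they sit exactly where you flag them --- but flagging them is not closing them, and they are where essentially all of the content of the cited results of \cite{Boissonat} lies. For (1), the chord-versus-arc estimate $\vert x-y\vert\ge 2\sin(\ell/2)$ can only be inverted once one knows a priori that $\ell=d_M(x,y)\le\pi$: for $\ell\in(\pi,2\pi)$ the same inequality is compatible with $\vert x-y\vert$ arbitrarily small, and for larger $\ell$ the cosine comparison breaks down altogether. Your suggestion that this a priori bound comes from control of the injectivity radius is both circular in the context of this paper (Corollary \ref{kor inj conv} is deduced \emph{from} this proposition) and insufficient: a lower bound on the injectivity radius does not prevent two intrinsically distant sheets of $M$ from being extrinsically close, which is precisely the phenomenon the reach must be used to exclude; doing so quantitatively requires something like Federer's local estimate $d_M(x,y)\le\tau-\tau\sqrt{1-2\vert x-y\vert/\tau}$ followed by a bootstrap. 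For (2), the convexity computation is fine --- indeed you do not need the normal-displacement estimate at all, which is in any case stated for pairs of points of $M$ whereas $x\in\BR^s$ need not lie on $M$; the crude bound $\vert\langle(\gamma(s)-x)^{\perp},\gamma''(s)\rangle\vert\le\vert\gamma(s)-x\vert$ already gives $g''>0$ wherever $\vert\gamma(s)-x\vert<1$. But your reduction then requires excluding that the minimizing geodesic attains ambient distance $\ge 1$ from $x$, and the tools you invoke do not obviously deliver this: the length bound $\ell\le 2\arcsin(r)$ from (1) only confines $\gamma$ to the ball of radius $r+\arcsin(r)$ about $x$, which exceeds $1$ once $r$ is roughly $0.49$ or more, well inside the range $r<1$ (and $r<2$ in dimension one) that the paper actually uses. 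As written, then, the proposal proves (3) and (4) but reduces (1) and (2) to their hard steps without supplying them; you should either import the missing lemmas from \cite{Federer curvature measures} or \cite{Boissonat} explicitly, or treat the proposition, as the authors do, as a citation.
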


See, in that order Lemma 3, Corollary 1 and Lemma 5 in \cite{Boissonat} for the first three claims of Proposition \ref{prop basic geometry}. See then Proposition A.1 (iii) in \cite{Aamari} for the final claim. In any case, we refer to \cite{Aamari} and \cite{Boissonat}, and to the references therein for general facts about submanifolds $M\subset\BR^s$ of positive reach.

\subsection{Injectivity radius}
Armed with Proposition \ref{prop basic geometry} we can now derive a lower bound for the injectivity radius of those submanifolds $M\subset\BR^s$ with reach $\tau(M)\ge 1$. Recall that the {\em injectivity radius} of a geodesically complete Riemannian manifold $M$ at a point $x\in M$ is defined as
$$\inj(M,x)=\sup\{t>0\ \vert\ \exp_x:T_xM\to M\text{ is injective on }B^{T_xM}(0,t)\}$$
where $B^{T_xM}(0,t)=\{v\in T_xM\text{ with }\vert v\vert\le r\}$ and where $\exp_x$ is the Riemannian exponential map. The {\em injectivity radius} of $M$ itself is then defined to be
$$\inj(M)=\inf_{x\in M}\inj(M,x).$$
Recall also that the {\em systole} $\syst(M)$ of a closed manifold $M$ is the length of the shortest non-trivial closed geodesic. The importance of the systole now is that, together with an upper bound $\kappa_M\le\kappa$ for the sectional curvature, the systole yields very a usable bound \cite[Thm. 89]{Berger} for the injectivity radius:
$$\inj(M)\ge\min\left\{\frac{\pi}{\sqrt\kappa},\frac 12\syst(M)\right\}.$$
Altogether, see \cite{Berger} for basic facts and definitions from Riemannian geometry. 

Suppose now that $M\subset\BR^s$ is a closed submanifold of euclidean space with $\tau(M)\ge 1$, and note the that assumption that $M$ is closed implies that it is metrically complete and thus geodesically complete by the Hopf-Rinow theorem. On the other hand if $\gamma:\BS^1\to M\subset\BR^s$ is any smooth curve and if $t$ is such that $\vert \gamma(t)-\gamma(0)\vert$ is maximal, then $\langle\gamma'(0),\gamma'(t)\rangle=0$. We thus get from (3) in Proposition \ref{prop basic geometry} that every closed geodesic reaches at least distance $\frac\pi 2$, and hence that
$$\syst(M)\ge \pi.$$ 
Now, this fact together with the upper bound for the sectional curvature from (4) in Proposition \ref{prop basic geometry} and with the lower bound for the injectivity radius yields the following:

\begin{kor}\label{kor inj conv}
If $M\subset\BR^s$ is a closed submanifold of euclidean space with reach $\tau(M)\ge 1$, then we have $\inj(M)\ge\frac\pi 2$.\qed
\end{kor}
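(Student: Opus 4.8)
The plan is to combine a lower bound on the systole of $M$ with the upper curvature bound from Proposition \ref{prop basic geometry}(4), feeding both into the injectivity-radius estimate $\inj(M)\ge\min\{\pi/\sqrt\kappa,\tfrac12\syst(M)\}$ recalled above. Since $M$ is closed it is compact, hence metrically complete, hence geodesically complete by Hopf-Rinow, so this estimate applies. With the curvature upper bound $\kappa_M\le 1$ we may take $\kappa=1$, so the first term contributes $\pi/\sqrt 1=\pi$; everything therefore reduces to showing $\syst(M)\ge\pi$, after which $\inj(M)\ge\min\{\pi,\tfrac12\syst(M)\}\ge\min\{\pi,\tfrac\pi2\}=\tfrac\pi2$. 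Note in particular that the value $\tfrac\pi2$ is forced by the half-systole being the binding constraint.

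The heart of the matter is thus the systole bound, and the plan is to show that every closed geodesic in $M$ carries two points at intrinsic distance at least $\tfrac\pi2$. Let $\gamma\colon[0,\ell]\to M\subset\BR^s$ be a closed geodesic of length $\ell=\syst(M)$, parametrized by arc length, so that $\gamma'(0)=\gamma'(\ell)$ and $\int_0^\ell\gamma'(t)\,dt=\gamma(\ell)-\gamma(0)=0$. I would exploit this closing-up condition to locate a reversal of the tangent field: pairing the vanishing integral with the unit vector $\gamma'(0)\in\BR^s$ gives $\int_0^\ell\langle\gamma'(0),\gamma'(t)\rangle\,dt=0$, so the ambient inner product $\langle\gamma'(0),\gamma'(t)\rangle$ has mean zero while equalling $1$ at $t=0$, and must therefore be $\le 0$ at some $t_0\in(0,\ell)$; equivalently $\angle(\gamma'(0),\gamma'(t_0))\ge\tfrac\pi2$.

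Now I would invoke Proposition \ref{prop basic geometry}(3), applied to the geodesic segment $\gamma|_{[0,t_0]}$, to bound this ambient angle between endpoint tangents by the intrinsic distance between the endpoints:
\[\tfrac\pi2\le\angle(\gamma'(0),\gamma'(t_0))\le d_M(\gamma(0),\gamma(t_0)).\]
On the other hand $\gamma(0)$ and $\gamma(t_0)$ are joined along the closed geodesic by arcs of lengths $t_0$ and $\ell-t_0$, so $d_M(\gamma(0),\gamma(t_0))\le\min\{t_0,\ell-t_0\}\le\tfrac\ell2$. Comparing the two bounds yields $\tfrac\ell2\ge\tfrac\pi2$, i.e. $\syst(M)=\ell\ge\pi$, which closes the chain.

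The only genuine obstacle is this systole estimate, and within it the step producing a point where the tangent has turned by at least a right angle. The mean-zero argument handles it cleanly; by contrast, selecting the ambient-farthest point from $\gamma(0)$ only shows the tangent there is orthogonal to the chord, which does not by itself control $\angle(\gamma'(0),\gamma'(t_0))$, so I would favour the averaging argument. Completeness, the curvature input, and the final minimum are then direct appeals to facts already established.
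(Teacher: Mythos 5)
Your proof is correct and follows the same overall architecture as the paper's: geodesic completeness via Hopf--Rinow, the curvature bound $\kappa_M\le 1$ from Proposition \ref{prop basic geometry}(4), the estimate $\inj(M)\ge\min\{\pi/\sqrt{\kappa},\tfrac12\syst(M)\}$, and a systole bound $\syst(M)\ge\pi$ obtained by producing a point on a closed geodesic where the tangent direction has turned by at least $\tfrac\pi2$ and then feeding that into Proposition \ref{prop basic geometry}(3). The one place you genuinely diverge is in producing that point. The paper takes $t_0$ maximizing the ambient distance $\vert\gamma(t)-\gamma(0)\vert$ and asserts $\langle\gamma'(0),\gamma'(t_0)\rangle=0$; as you observe, the first-order condition at such a maximizer only yields orthogonality of $\gamma'(t_0)$ to the chord $\gamma(t_0)-\gamma(0)$, not to $\gamma'(0)$, so the paper's justification is, as literally written, incomplete (it can be repaired by maximizing the height function $t\mapsto\langle\gamma(t)-\gamma(0),\gamma'(0)\rangle$ instead, whose interior critical points do satisfy $\langle\gamma'(0),\gamma'(t_0)\rangle=0$). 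Your averaging argument --- $\int_0^\ell\langle\gamma'(0),\gamma'(t)\rangle\,dt=\langle\gamma'(0),\gamma(\ell)-\gamma(0)\rangle=0$ combined with the value $1$ at $t=0$ forces $\langle\gamma'(0),\gamma'(t_0)\rangle\le 0$ somewhere --- is a clean and fully correct substitute, and the rest of your chain, $\tfrac\pi2\le\angle(\gamma'(0),\gamma'(t_0))\le d_M(\gamma(0),\gamma(t_0))\le\min\{t_0,\ell-t_0\}\le\tfrac\ell2$, is exactly the paper's implicit bookkeeping. So the two proofs buy the same conclusion; yours is slightly more robust at the single step where the paper is sloppy.
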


It should be noted that if $\dim(M)=1$ and $\tau(M)\ge 1$ then $\syst(M)\ge 2\pi$ and hence $\inj(M)\ge\pi$. The standard circle shows that this is optimal. We do not know by how much can one improve the bound given in Corollary \ref{kor inj conv} in other dimensions.

\subsection{Volumes of balls}

Although we will be mostly interested in the volume of sets $\BB^M(x,r)=M\cap\BB(x,r)$ we start by considering the volume of actual metric balls in the manifold. First note that having bounds on the curvature, we also get bounds on the volumes of balls, at least as long as the radius remains below the injectivity radius. More precisely, if $M$ is a Riemannian manifold of dimension $d\ge 2$ and with curvature pinched in $[-2,1]$ then we have
\begin{equation}\label{eq bounds volume balls}
\vol(B^{\BS^d}(r))\le\vol(B^M(x,r))\le\vol(B^{\frac 1{\sqrt 2}\BH^d}(r))
\end{equation}
for all $r<\inj_M(x)$ (see \cite[Thm. 103]{Berger} and \cite[Thm. 107]{Berger}). Here, the sphere $\BS^d$ and the scaled hyperbolic space $\frac 1{\sqrt 2}\BH^d$ are respectively the simply connected complete $d$-dimensional manifolds of constant curvature $1$ and $-2$. Having bounds for the volumes of balls we also have bounds for the ratios between volumes of balls of different radius. We can get however somewhat better results:

\begin{prop}\label{prop volume of balls}
Suppose that $M$ is a $d$-dimensional ($d\ge 2$) Riemannian manifold with sectional curvature pinched in $[-2,1]$. We then have
$$\frac Rr\cdot \left(\frac{\sin(R)}{\sin(r)}\right)^{d-1}\le\frac{\vol(B^M(R,x))}{\vol(B^M(r,x))}\le \frac{\int_0^{\sqrt 2\cdot R}\sinh^{d-1}(t)\ dt}{\int_0^{\sqrt 2\cdot r}\sinh^{d-1}(t)\ dt}$$
for any two $0<r<R<\min\{\pi,\inj(M)\}$ and any $x\in M$.
\end{prop}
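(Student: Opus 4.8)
\emph{The plan} is to work in geodesic polar coordinates centred at $x$ and to apply the two halves of the Rauch/Bishop--Günther comparison machinery, one for each of the two curvature bounds coming from Proposition \ref{prop basic geometry}(4). Since $R<\inj(M)$, the map $\exp_x$ restricts to a diffeomorphism from the ball of radius $R$ in $T_xM$ onto $B^M(x,R)$, so I may write $\vol(B^M(x,\rho))=\int_{\BS^{d-1}}\int_0^\rho a(u,\theta)\,du\,d\theta$, where $a(u,\theta)$ is the Jacobian density of $\exp_x$ and $\theta$ ranges over the unit sphere of $T_xM$. Everything then reduces to comparing $a(u,\theta)$ with the radial volume densities $\sin^{d-1}(u)$ and $\big(\tfrac{1}{\sqrt 2}\sinh(\sqrt 2\,u)\big)^{d-1}$ of the space forms $\BS^d$ and $\tfrac{1}{\sqrt 2}\BH^d$ of constant curvature $+1$ and $-2$.

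For the upper bound I would use only the lower curvature bound $\kappa_M\ge -2$. This is the classical Bishop--Gromov situation: the ratio $\vol(B^M(x,\rho))/\vol(B^{\frac{1}{\sqrt 2}\BH^d}(\rho))$ is non-increasing in $\rho$, so for $r<R$ one gets $\vol(B^M(x,R))/\vol(B^M(x,r))\le \vol(B^{\frac{1}{\sqrt 2}\BH^d}(R))/\vol(B^{\frac{1}{\sqrt 2}\BH^d}(r))$. It then only remains to evaluate the model ratio: writing the hyperbolic ball volume as a multiple of $\int_0^\rho\big(\tfrac{1}{\sqrt 2}\sinh(\sqrt 2\,u)\big)^{d-1}\,du$ and substituting $t=\sqrt 2\,u$ turns it into a multiple of $\int_0^{\sqrt 2\rho}\sinh^{d-1}(t)\,dt$, with the constant cancelling in the ratio to give exactly the right-hand side of the proposition.

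For the lower bound I would use the upper curvature bound $\kappa_M\le 1$. Günther's inequality supplies both the pointwise estimate $a(u,\theta)\ge \sin^{d-1}(u)$ and the monotonicity of $u\mapsto a(u,\theta)/\sin^{d-1}(u)$; both hold on $(0,R]$ because $\kappa_M\le 1$ pushes the first conjugate point to distance $\ge\pi>R$, while $R<\inj(M)$ keeps us inside the region where the polar coordinates are legitimate. Integrating over $\theta$, the sphere-area function $S(\rho)=\int_{\BS^{d-1}}a(\rho,\theta)\,d\theta$ inherits the property that $S(\rho)/\sin^{d-1}(\rho)$ is non-decreasing, and since $\vol(B^M(x,\rho))/\int_0^\rho\sin^{d-1}$ is then a weighted average of a non-decreasing function over a growing interval, it too is non-decreasing. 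This gives the intermediate estimate $\vol(B^M(x,R))/\vol(B^M(x,r))\ge \big(\int_0^R\sin^{d-1}\big)/\big(\int_0^r\sin^{d-1}\big)$.

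The final step, and the only genuinely non-formal one, is to pass from this integral ratio to the integral-free bound in the statement, i.e. to prove the one-variable inequality $\big(\int_0^R\sin^{d-1}\big)/\big(\int_0^r\sin^{d-1}\big)\ge \tfrac{R}{r}(\sin R/\sin r)^{d-1}$ for $0<r<R<\pi$. Equivalently, I would show that $\rho\mapsto \big(\rho\sin^{d-1}\rho\big)^{-1}\int_0^\rho\sin^{d-1}(u)\,du$ is non-decreasing, which upon differentiating reduces to a trigonometric inequality involving $\sin^{d-1}$ alone, the regime $\sin\rho+\rho(d-1)\cos\rho<0$ being automatic. I expect this small calculus lemma to be \emph{the main obstacle}: the comparison-geometry input is entirely standard, but extracting the clean closed form $\tfrac{R}{r}(\sin R/\sin r)^{d-1}$—which is precisely the shape needed to feed into Theorem \ref{sat pain in the ass}—from the sharp integral ratio requires a dedicated, if elementary, computation.
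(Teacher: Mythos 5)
Your upper bound is exactly the paper's (Bishop--Gromov for $\kappa\ge -2$, plus the rescaling $t=\sqrt2\,u$), but your lower bound takes a genuinely different route. The paper never works in polar coordinates: it uses the CAT($1$) property to show that the radial projection $S^M(x,R)\to S^M(x,r)$ is $\tfrac{\sin r}{\sin R}$-Lipschitz, integrates the resulting area inequality via the co-area formula after the substitution $t=Ts$ with $T=r/R$, and then invokes the monotonicity of $s\mapsto \sin(Ts)/\sin(s)$ to land directly on the closed form $\tfrac Rr(\sin R/\sin r)^{d-1}$ --- no comparison with $\int_0^\rho\sin^{d-1}$ ever appears. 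Your route (G\"unther's pointwise Jacobian comparison, monotonicity of $a(u,\theta)/\sin^{d-1}(u)$ up to the conjugate radius, hence of $\vol(B^M(x,\rho))/\int_0^\rho\sin^{d-1}$) is standard and correct, and it actually yields the sharper intermediate bound $\vol(B^M(x,R))/\vol(B^M(x,r))\ge \vol(B^{\BS^d}(R))/\vol(B^{\BS^d}(r))$; the price is the extra one-variable inequality you flag as the main obstacle. That lemma is true and is less of an obstacle than you fear: substituting $u=\rho v$ gives
\[
\frac{\int_0^\rho\sin^{d-1}(u)\,du}{\rho\,\sin^{d-1}(\rho)}=\int_0^1\left(\frac{\sin(\rho v)}{\sin(\rho)}\right)^{d-1}dv,
\]
and for each fixed $v\in(0,1)$ the map $\rho\mapsto \sin(\rho v)/\sin(\rho)$ is non-decreasing on $(0,\pi)$ (its logarithmic derivative is $v\cot(\rho v)-\cot(\rho)\ge 0$ because $t\mapsto t\cot(t\rho)$ is non-increasing, as $\tfrac12\sin(2t\rho)\le t\rho$). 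This is precisely the same monotonicity fact the paper itself invokes in its co-area argument, so the two proofs ultimately hinge on the same elementary inequality; with that observation your plan closes completely.
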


\begin{proof}
We begin with the upper bound. We get from the curvature bound $\kappa\ge -2$ and the Bishop-Gromov comparison theorem \cite[p.310]{Berger} that
$$\frac{\vol(B^M(x,R))}{\vol(B^M(x,r))}\le\frac{\vol(B^{\frac 1{\sqrt 2}\BH^d}(R))}{\vol(B^{\frac 1{\sqrt 2}\BH^d}(r))}=\frac{\int_0^{\sqrt 2\cdot R}\sinh^{d-1}(t)\ dt}{\int_0^{\sqrt 2\cdot r}\sinh^{d-1}(t)\ dt}$$
for all $x\in M$, as we wanted.
\medskip

Let us now deal with the lower bound. Well, the fact that $M$ has curvature pinched from above by $1$ implies that $M$ is is locally a CAT($1$)-space---see \cite{Bridson-Haffliger,Cheerger-Ebin} for facts about CAT($\kappa$)-spaces and comparisson geometry. Recall now that we are working at a scale smaller than $\pi$ and the injectivity radius. In particular, the CAT($1$) property implies that geodesic triangles we encounter are thinner in our manifold than in $\BS^d$. This implies in particular that, for $0<r<R<\min\{\pi,\inj(M)\}$, the radial projection
$$\proj:S^M(x,R)\to S^M(x,r),\ \ \proj(y)=\exp_x(r\cdot R^{-1}\cdot\exp_x^{-1}(y))$$
contracts distances more (expands distances less) than the corresponding map in the sphere, meaning that $\proj$ is $\frac{\sin(r)}{\sin(R)}$-Lipschitz. We deduce thus that 
$$\Area(S^M(x,r))\le\left(\frac{\sin(r)}{\sin(R)}\right)^{d-1}\cdot\Area(S^M(x,R))$$
where $S^M(x,r)=\D B^M(x,r)$ is the distance-$r$-sphere in $M$ and $\Area(\cdot)$ stands for the $(d-1)$-dimensional volume. Anyways, if we set $T=\frac rR$ then we get from the co-area formula that
\begin{align*}
\vol(B^M(x,r))
&=\int_0^r\Area(S^M(x,t))\ dt\\
&\stackrel{s=\frac 1Tt}=\int_0^R\Area(S^M(x,T\cdot s))\cdot T\ ds\\
&\le \frac rR\cdot\int_0^R\left(\frac{\sin(Ts)}{\sin(s)}\right)^{d-1}\Area(S^M(x,s))\ ds
\end{align*}
The function 
$$(0,\pi)\to\BR,\ s\to\frac{\sin(Ts)}{\sin(s)}$$ 
is monotonically increasing (because $T=\frac rR\in(0,1)$). This means that 
\begin{align*}
\vol(B^M(x,r))
&\le \frac rR\cdot \left(\frac{\sin(r)}{\sin(R)}\right)^{d-1}\cdot\int_0^R\Area(S^M(x,s))\ ds\\
&=\frac rR\cdot \left(\frac{\sin(r)}{\sin(R)}\right)^{d-1}\cdot\vol(B^M(x,R))
\end{align*}
And we are done.
\end{proof}

We come now to the result we really care about:

\begin{kor}\label{kor volume of balls}
Suppose that $M\subset\BR^s$ is a closed $d$-dimensional ($d\ge 1$) submanifold with reach $\tau(M)\ge 1$. We then have
$$\vol(B^{\BS^d}(r))\le\vol(\BB^M(x,r))\le\vol\left(B^{\frac 1{\sqrt 2}\BH^d}\left(2\arcsin\left(\frac r2\right)\right)\right)$$
and
$$\frac{\frac{R}{2}}{\arcsin(\frac {r}2)}\!\left(\!\frac{\sin(R)}{\sin(2\arcsin\frac {r}2)}\!\right)^{d-1}\!\!\!\!\le\frac{\vol(\BB^M(x,R))}{\vol(\BB^M(x,r))}\!\le \frac{\int_0^{\sqrt 2\cdot 2\cdot\arcsin(\frac {R}2)}\!\sinh^{d-1}(t)dt}{\int_0^{\sqrt 2\cdot r}\sinh^{d-1}(t)dt}$$
for any two $0<r<R<1$ and any $x\in M$.
\end{kor}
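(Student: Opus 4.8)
The plan is to compare the extrinsic sets $\BB^M(x,r)=M\cap\BB(x,r)$ with genuine metric balls $B^M(x,\rho)$ and then quote the intrinsic estimates already established in \eqref{eq bounds volume balls} and Proposition \ref{prop volume of balls}. The basic observation is the two-sided inclusion
$$B^M(x,r)\subseteq\BB^M(x,r)\subseteq B^M\!\left(x,2\arcsin\tfrac r2\right)$$
valid for all $0<r<1$: the left inclusion is immediate from $\vert x-y\vert\le d_M(x,y)$ (a straight segment is never longer than a path inside $M$), while the right inclusion is precisely part (1) of Proposition \ref{prop basic geometry} together with the monotonicity of $\arcsin$. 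I would first record the bookkeeping on the injectivity radius: since $r<1$ we have $2\arcsin(r/2)<2\arcsin(1/2)=\pi/3<\pi/2\le\inj(M)$ by Corollary \ref{kor inj conv}, so every metric ball appearing below has radius strictly less than $\min\{\pi,\inj(M)\}$ and the comparison results apply.

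For the absolute bounds, assume first $d\ge 2$ and simply sandwich. The left inclusion and the lower estimate in \eqref{eq bounds volume balls} give $\vol(B^{\BS^d}(r))\le\vol(B^M(x,r))\le\vol(\BB^M(x,r))$, while the right inclusion and the upper estimate in \eqref{eq bounds volume balls} at radius $2\arcsin(r/2)$ give $\vol(\BB^M(x,r))\le\vol(B^M(x,2\arcsin\tfrac r2))\le\vol(B^{\frac{1}{\sqrt 2}\BH^d}(2\arcsin\tfrac r2))$. For the upper ratio bound I would feed the right inclusion into the numerator and the left inclusion into the denominator, obtaining
$$\frac{\vol(\BB^M(x,R))}{\vol(\BB^M(x,r))}\le\frac{\vol(B^M(x,2\arcsin\tfrac R2))}{\vol(B^M(x,r))},$$
and then apply the upper bound of Proposition \ref{prop volume of balls} with the radii $r<2\arcsin(R/2)$ (legitimate since $r<R<2\arcsin(R/2)$, because $\arcsin$ exceeds the identity). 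This reproduces exactly the claimed right-hand side.

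The lower ratio bound is the only place that needs care. Using $B^M(x,R)\subseteq\BB^M(x,R)$ in the numerator and $\BB^M(x,r)\subseteq B^M(x,2\arcsin\tfrac r2)$ in the denominator yields
$$\frac{\vol(\BB^M(x,R))}{\vol(\BB^M(x,r))}\ge\frac{\vol(B^M(x,R))}{\vol(B^M(x,2\arcsin\tfrac r2))}.$$
Here the two metric radii $R$ and $2\arcsin(r/2)$ need not be correctly ordered, so I would split into cases. When $R>2\arcsin(r/2)$, the lower bound of Proposition \ref{prop volume of balls}, applied with larger radius $R$ and smaller radius $2\arcsin(r/2)$, gives precisely $\frac{R/2}{\arcsin(r/2)}\left(\frac{\sin R}{\sin(2\arcsin(r/2))}\right)^{d-1}$. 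When instead $R\le 2\arcsin(r/2)$, that same expression is at most $1$ (both the prefactor $\frac{R}{2\arcsin(r/2)}$ and, since $\sin$ increases on $(0,\pi/3)$, the factor $\frac{\sin R}{\sin(2\arcsin(r/2))}$ are $\le 1$), whereas the true ratio is at least $1$ because $\BB^M(x,r)\subseteq\BB^M(x,R)$; so the inequality holds trivially.

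Finally the case $d=1$, excluded from \eqref{eq bounds volume balls} and Proposition \ref{prop volume of balls}, I would treat by hand: then $M$ is a disjoint union of circles, $\inj(M)\ge\pi$, and any metric ball $B^M(x,\rho)$ with $\rho<\pi$ is an arc of length $2\rho$, so $\vol(B^M(x,\rho))=2\rho$. Substituting this into the same two inclusions gives all four inequalities directly, the ratio bounds again via the case distinction but now with all $(d-1)$-st powers equal to $1$. The main obstacle is therefore not any single hard estimate but the bookkeeping: tracking which inclusion feeds the numerator and which the denominator, checking every radius against $\inj(M)$, and isolating the degenerate regime $R\le 2\arcsin(r/2)$ in the lower ratio bound.
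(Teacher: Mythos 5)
Your argument is correct and follows the same route as the paper's: the two-sided inclusion $B^M(x,t)\subset\BB^M(x,t)\subset B^M(x,2\arcsin(t/2))$ from Proposition \ref{prop basic geometry}(1), fed into \eqref{eq bounds volume balls} for the absolute bounds and into Proposition \ref{prop volume of balls} for the ratio bounds. The one place you add genuine content is the lower ratio bound, which the paper dismisses as ``analogous'': your observation that the radii $R$ and $2\arcsin(r/2)$ need not be correctly ordered, so that Proposition \ref{prop volume of balls} cannot always be invoked directly and the degenerate case $R\le 2\arcsin(r/2)$ must be disposed of by noting the claimed bound is then $\le 1$ while the true ratio is $\ge 1$, is exactly the non-symmetric detail the reader is asked to supply. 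Your $d=1$ treatment via exact arc lengths also works and yields the stated inequalities; the paper instead bounds the difference $\vol(\BB^M(x,R))-\vol(\BB^M(x,r))$ and obtains a marginally sharper estimate valid for all $R<2$, but nothing in the corollary as stated requires that refinement.
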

\begin{proof}
Let's assume for the time being that $d\ge 2$. From part (1) in Proposition \ref{prop basic geometry} we get for any $t<2$ that 
\begin{equation}\label{eq balls in balls}
B^M(x,t)\subset\BB^M(x,t)\subset B^M\left(x,2\arcsin\left(\frac t2\right)\right)
\end{equation}
The first claim then follows directly from \eqref{eq bounds volume balls}. 

On the other hand, if we combine \eqref{eq balls in balls} with the upper bound in Proposition \ref{prop volume of balls} we get that
\begin{align*}
\frac{\vol(\BB^M(x,R))}{\vol(\BB^M(x,r))}
&\le\frac{\vol\left(B^M\left(x,2\arcsin\left(\frac R2\right)\right)\right)}{\vol(B^M(r,x))}\\
&\le \frac{\int_0^{\sqrt 2\cdot 2\arcsin\left(\frac R2\right)}\sinh^{d-1}(t)\ dt}{\int_0^{\sqrt 2\cdot r}\sinh^{d-1}(t)\ dt},
\end{align*}
and we are done with the upper bound of the second claim. The lower bound is obtained analogously and we leave the details to the reader.
\medskip

So far we have been focusing on the case of dimension $d\ge 2$. In dimension $d=1$ we actually get from \eqref{eq balls in balls} that
$$R-r\le \vol(\BB^M(x,R))-\vol(\BB^M(x,R))\le 2\arcsin\left(\frac{R-r}2\right)$$
This implies directly that 
$$\frac Rr\le\frac{\vol(\BB^M(x,R))}{\vol(\BB^M(x,r))}\le 1+\frac{2\arcsin\left(\frac{R-r}2\right)}r$$
for all $0<r<R<1$. The so-obtained bound for $d=1$ is slightly better than the one we had claimed. We also note that this bound still works for $0<r<R<2$.
\end{proof}

\subsection{Volume of thick diagonal}

Our goal here is to estimate how the volume $\vol(DM(\epsilon))$ of the {\em $\epsilon$-thick diagonal} 
$$DM(\epsilon)=\{(x,y)\in M\times M\text{ with }\vert x-y\vert\le\epsilon\}$$
of a reach-1 submanifold $M\subset\BR^s$ varies when we replace $\epsilon$ by something else. Here the volume is computed as a subset of the Riemannian manifold $M\times M$, but can be expressed as an integral over $M$ as follows:
\begin{equation}\label{eq volume thick diagonal as integral}
\vol(DM(\epsilon))=\int_M\vol(\BB^M(x,\epsilon))\ dx
\end{equation}
In other words, $\vol(DM(\epsilon))$ is nothing other than the correlation integral at scale $\epsilon$ of the riemannian measure of $M$ when considered as a measure on the ambient euclidean space $\BR^s$. We stress that the thick diagonal is defined in terms of the ambient distance in euclidean space, not in terms of the intrinsic distance of $M$. Anyways, now we prove the following:

\begin{sat}\label{sat pain in the ass}
Suppose that $M\subset\BR^s$ is a $d$-dimensional ($d\ge 1$) closed submanifold with reach $\tau(M)\ge 1$. Then we have
$$\frac{\frac{\epsilon_1}{2}}{\arcsin(\frac {\epsilon_2}2)}\!\left(\!\frac{\sin(\epsilon_1)}{\sin(2\arcsin\frac {\epsilon_2}2)}\!\right)^{d-1}\!\!\!\!\le\frac{\vol(DM(\epsilon_1))}{\vol(DM(\epsilon_2))}\!\le \frac{\int_0^{\sqrt 2\cdot 2\cdot\arcsin(\frac {\epsilon_1}2)}\!\sinh^{d-1}(t)dt}{\int_0^{\sqrt 2\cdot \epsilon_2}\sinh^{d-1}(t)dt}$$
for any two $0<r<R<1$.
\end{sat}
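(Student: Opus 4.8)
The plan is to derive Theorem \ref{sat pain in the ass} by integrating the pointwise ball estimates from Corollary \ref{kor volume of balls} over $M$, using the integral representation \eqref{eq volume thick diagonal as integral}. The key structural observation is that the quantities appearing as bounds in Corollary \ref{kor volume of balls}, namely
$$L(d) = \frac{\frac{\epsilon_1}{2}}{\arcsin(\frac{\epsilon_2}{2})}\left(\frac{\sin(\epsilon_1)}{\sin(2\arcsin\frac{\epsilon_2}{2})}\right)^{d-1} \quad\text{and}\quad U(d) = \frac{\int_0^{\sqrt 2\cdot 2\arcsin(\frac{\epsilon_1}{2})}\sinh^{d-1}(t)\,dt}{\int_0^{\sqrt 2\cdot \epsilon_2}\sinh^{d-1}(t)\,dt},$$
do not depend on the base point $x$. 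This is exactly what makes the integration clean.

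First I would invoke Corollary \ref{kor volume of balls} at each point $x\in M$ with $R=\epsilon_1$ and $r=\epsilon_2$ (the hypothesis $0<\epsilon_2<\epsilon_1<1$ is what the statement should read, rather than the stray "$0<r<R<1$"). This gives the two-sided pointwise bound
$$L(d)\cdot\vol(\BB^M(x,\epsilon_2))\le\vol(\BB^M(x,\epsilon_1))\le U(d)\cdot\vol(\BB^M(x,\epsilon_2))$$
valid for every $x\in M$. Since $L(d)$ and $U(d)$ are constants independent of $x$, I would then integrate all three terms over $M$ with respect to the Riemannian volume form. Using \eqref{eq volume thick diagonal as integral}, the middle integral becomes $\vol(DM(\epsilon_1))$ and both outer integrals become the respective constant times $\vol(DM(\epsilon_2))$, yielding
$$L(d)\cdot\vol(DM(\epsilon_2))\le\vol(DM(\epsilon_1))\le U(d)\cdot\vol(DM(\epsilon_2)).$$
Dividing through by $\vol(DM(\epsilon_2))$, which is positive because $M$ is closed and the integrand $\vol(\BB^M(x,\epsilon_2))$ is everywhere positive, gives exactly the claimed inequalities.

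The only genuine subtlety, and the step I would flag as requiring a word of care, is the dimension-$1$ case: Proposition \ref{prop volume of balls} is stated only for $d\ge 2$, so the form of $L(1)$ and $U(1)$ must be checked to be consistent with the better one-dimensional bound $\frac Rr\le\frac{\vol(\BB^M(x,R))}{\vol(\BB^M(x,r))}\le 1+\frac{2\arcsin((R-r)/2)}{r}$ recorded at the end of the proof of Corollary \ref{kor volume of balls}. When $d=1$ the exponent $d-1$ vanishes, so $L(1)=\frac{\epsilon_1/2}{\arcsin(\epsilon_2/2)}$ and $U(1)=\frac{2\arcsin(\epsilon_1/2)}{\epsilon_2}$; since the same pointwise bounds hold at every $x$, the integration argument goes through verbatim. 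Beyond that, everything is a direct integration of an $x$-independent inequality, so there is no real obstacle—the work is entirely front-loaded into Corollary \ref{kor volume of balls}, and this theorem is essentially its global repackaging.
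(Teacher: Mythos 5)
Your proposal is correct and coincides with the paper's own argument: the paper likewise combines the integral representation \eqref{eq volume thick diagonal as integral} with the pointwise, $x$-independent bounds of Corollary \ref{kor volume of balls}, phrasing the integration step as the observation that $\frac{\vol(DM(\epsilon_1))}{\vol(DM(\epsilon_2))}$ is squeezed between the min and max over $x\in M$ of $\frac{\vol(\BB^M(x,\epsilon_1))}{\vol(\BB^M(x,\epsilon_2))}$, which is exactly your ``multiply by the constants $L(d)$, $U(d)$ and integrate'' step. Your remarks on the $d=1$ case and on the typo $0<r<R<1$ versus $0<\epsilon_2<\epsilon_1<1$ are both apt.
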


\begin{proof}
From the expression \eqref{eq volume thick diagonal as integral} we get that
$$\min_{x\in M}\frac{\vol(\BB^M(R,x))}{\vol(\BB^M(r,x))}\le \frac{\vol(DM(R))}{\vol(DM(r))}\le\max_{x\in M}\frac{\vol(\BB^M(R,x))}{\vol(\BB^M(r,x))}$$
Now the claim follows from Corollary \ref{kor volume of balls}.
\end{proof}

\subsection{The gap}
As we mentioned in the introduction, if we sample more and more points from a manifold $M$ and we apply the our algorithm, then what we are doing is computing the quantity 
$$\frac{\log(\vol(DM(\epsilon_1)))-\log(\vol(DM(\epsilon_2)))}{\log(\epsilon_1)-\log(\epsilon_2)}.$$
Now, armed with Theorem \ref{sat pain in the ass} we could analyse what happens when one of the scales $\epsilon_1$ and $\epsilon_2$ tends to $0$, or when the gap between them tends to $0$, or when the dimension grows. All of this would be nice and well, but what we actually care about is to find scales that on the one hand keep our estimator reliable while working with as few points as possible. Implementing numerically a procedure described in Section \ref{sec: searche decent scales} below, we find convenient scales for the dimensions we are mostly interested in. Note that the condition $R<1$ can be replaced by $R<2$ for dimension 1, as explained in the proof of Corollary \ref{kor volume of balls}.

\begin{kor}\label{kor examples of gaps}
Let $M\subset\BR^s$ be a submanifold of dimension $d=1,2,\dots,10$ and with reach $\tau(M)\ge 1$, and let $\epsilon_1,\epsilon_2$ and $\gap_d$ be as in the table below. Then we have
$$d-\frac 12+\gap_d\le\frac{\log\left(\frac{\vol(\BB^M(\epsilon_1,x))}{\vol(\BB^M(\epsilon_2,x))}\right)}{\log\left(\frac{\epsilon_1}{\epsilon_2}\right)}\le d+\frac 12-\gap_d$$

\begin{minipage}{0.5\textwidth}
\begin{table}[H]
\begin{tabular}[t]{|l|l|l|l|l|}
\hline
$d$ & $\epsilon_1$ & $\epsilon_2$ & $\gap_d$  \\ \hline
$1$ & $1.5$ & $0.19$ & $0.463241$ \\ \hline
$2$  & $0.78$  & $0.2$ & $0.387573$ \\ \hline
$3$ & $0.63$ & $0.23$ & $0.307476$ \\ \hline
$4$  & $0.54$& $0.23$ & $0.249891$ \\ \hline
$5$ & $0.46$ & $0.22$ & $0.223958$ \\ \hline
\end{tabular}
\end{table}
\end{minipage}\hfill
\begin{minipage}{0.5\textwidth}
\begin{table}[H]
\begin{tabular}[t]{|l|l|l|l|l|}
\hline
$d$ & $\epsilon_1$ & $\epsilon_2$ & $\gap_d$  \\ \hline
$6$  & $0.4$ & $0.21$ & $0.208521$ \\ \hline
$7$ & $0.36$ & $0.21$ & $0.178814$ \\ \hline
$8$ & $0.33$ & $0.2$ & $0.166892$ \\ \hline
$9$ & $0.31$ & $0.19$ & $0.155560$ \\ \hline
$10$ & $0.29$ & $0.18$ & $0.152528$ \\ \hline
\end{tabular}
\end{table}
\end{minipage}
\qed
\end{kor}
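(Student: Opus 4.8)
The plan is to reduce the corollary to a point-independent, finite numerical check resting on Corollary \ref{kor volume of balls}. The key observation is that the upper and lower bounds that Corollary \ref{kor volume of balls} provides for the ratio $\vol(\BB^M(\epsilon_1,x))/\vol(\BB^M(\epsilon_2,x))$ depend only on $d$, $\epsilon_1$ and $\epsilon_2$, and not on the point $x$ nor on the particular manifold $M$. Thus it suffices to control these two explicit functions of $(\epsilon_1,\epsilon_2)$ for each of the ten listed dimensions.

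First I would apply Corollary \ref{kor volume of balls} with $R=\epsilon_1$ and $r=\epsilon_2$. This is legitimate because every row of the table satisfies $0<\epsilon_2<\epsilon_1<1$, except for $d=1$ where $\epsilon_1=1.5$; in that case one invokes instead the sharper one-dimensional estimate from the proof of Corollary \ref{kor volume of balls}, which is valid for $0<r<R<2$. Writing $L_d$ and $U_d$ for the resulting lower and upper bounds, we obtain, for every $x\in M$,
$$L_d\le\frac{\vol(\BB^M(\epsilon_1,x))}{\vol(\BB^M(\epsilon_2,x))}\le U_d.$$

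Next, since $\log$ is strictly increasing and $\log(\epsilon_1/\epsilon_2)>0$ (as $\epsilon_1>\epsilon_2$), dividing by $\log(\epsilon_1/\epsilon_2)$ preserves the inequalities and gives
$$\frac{\log L_d}{\log(\epsilon_1/\epsilon_2)}\le\frac{\log\!\left(\frac{\vol(\BB^M(\epsilon_1,x))}{\vol(\BB^M(\epsilon_2,x))}\right)}{\log(\epsilon_1/\epsilon_2)}\le\frac{\log U_d}{\log(\epsilon_1/\epsilon_2)}.$$
One then takes $\gap_d$ to be exactly the smaller of the two margins,
$$\gap_d=\min\left\{\frac{\log L_d}{\log(\epsilon_1/\epsilon_2)}-\Big(d-\frac{1}{2}\Big),\ \Big(d+\frac{1}{2}\Big)-\frac{\log U_d}{\log(\epsilon_1/\epsilon_2)}\right\},$$
so that the double inequality asserted in the corollary holds by construction; what remains is to verify, dimension by dimension, that this number is positive and agrees with the tabulated value.

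The only real work is therefore computational rather than conceptual: one must evaluate $L_d$ and the Bishop--Gromov integrals appearing in $U_d=\int_0^{\sqrt2\cdot 2\arcsin(\epsilon_1/2)}\sinh^{d-1}(t)\,dt\big/\int_0^{\sqrt2\,\epsilon_2}\sinh^{d-1}(t)\,dt$ with enough precision to certify the ten inequalities. For $d=1$ the integrands are constant and everything is elementary; for $2\le d\le 10$ one can expand $\sinh^{d-1}$ into exponentials (or use a reduction formula) to obtain closed forms for the integrals, after which the numbers can be certified rather than merely estimated. These evaluations are precisely what the implementation \cite{program} carries out, and they constitute the main---indeed the only---obstacle to a fully rigorous proof.
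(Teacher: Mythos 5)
Your proposal is correct and follows essentially the same route the paper takes (implicitly, since the corollary is stated with the proof omitted): apply Corollary \ref{kor volume of balls} with $R=\epsilon_1$, $r=\epsilon_2$ --- using the sharper one-dimensional estimate valid for $0<r<R<2$ when $d=1$ --- take logarithms, divide by $\log(\epsilon_1/\epsilon_2)$, define $\gap_d$ as the smaller of the two margins, and certify positivity numerically as in \cite{program}. You correctly identified both the reduction to point-independent bounds and the $d=1$ subtlety, so nothing is missing.
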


\section{Sampling the thick diagonal}\label{sec: prob}

In this section we will be still assuming that $M\subset\BR^s$ is a closed $d$-dimensional submanifold ($d\ge 1$) with reach $\tau(M)\ge 1$. Basically our goal is to bound the number of points that we have to sample from $M$ to get a decent result when we use \eqref{eq corrsum intro}.

\subsection{Some probability}
Suppose that we have a symmetric, say bounded, function 
$$f:M\times M\to\BR$$
We are interested in the sequence of random variables
\begin{equation}\label{eq our random variables}
X_n^f:M^{\BN}\to\BR,\ \ X_n(x_1,x_2,\dots)\mapsto\sum_{i,j\le n,\ i\neq j}f(x_j,x_j)
\end{equation}
when $n$ tends to $\infty$. Here we have endowed $M$ with the probability measure 
$$\Prob=\frac 1{\vol(M)}\vol$$ 
proportional to the Riemannian measure. Accordingly, $M^k$ and $M^\BN$ are all endowed with the corresponding product measure, again denoted by $\Prob$. 

Being the sum of random variables, the expectation and variance of $X_n^f$ are easy to get. Here they are:
\begin{align*}
E(X_n^f)&=n(n-1)\cdot E(f)\\
\Var(X_n^f)&=2\cdot n(n-1)\cdot\Var(f)+4\cdot n(n-1)(n-2)\cdot\cov(f)
\end{align*}
where $\cov(f)$ is the co-variance of $(x_1,\dots,x_n)\mapsto f(x_1,x_2)$ and $(x_1,\dots,x_n)\mapsto f(x_1,x_3)$, or in a formula
$$\cov(f)=\int_{M\times M\times M} f(x,y)\cdot f(x,z)\ d\Prob(x,z,y)- E(f)^2.$$
Besides the expectation and the variance, what we will need to estimate is the quantity $\frac{\Var}{E^2}$ for the random variables $X_n^f$. Well, this is what we get if we just use our expressions for the expectation and the variance:
\begin{equation}\label{eq var/er2}
\frac{\Var(X_n^f)}{E(X_n^f)^2}=\frac 2{n(n-1)}\cdot\frac{\Var(f)}{E(f)^2}+\frac{4(n-2)}{n(n-1)}\cdot\frac{\cov(f)}{E(f)^2}
\end{equation}
The reason why we will care about this last quantity is the following surely standard consequence of the  Bienaym\'e-Chebyshev inequality:

\begin{lem}\label{prop log chebysheff}
For any integrable random variable $X$ in a probability space $X$ we have 
$$\Prob\left(\left\vert\log\left(\frac{X}{E(X)}\right)\right\vert>\delta\right)\le \frac 1{(1-e^{-\delta})^2}\cdot \frac{\Var(X)}{E(X)^2}$$
\end{lem}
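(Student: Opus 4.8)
The statement is a logarithmic version of the Bienaymé–Chebyshev (Chebyshev) inequality. The plan is to reduce the event $\{|\log(X/E(X))| > \delta\}$ to a deviation event for $X$ itself, so that ordinary Chebyshev applies. Writing $\mu = E(X)$, the event that $\log(X/\mu)$ has absolute value exceeding $\delta$ is exactly the event that $X/\mu > e^{\delta}$ or $X/\mu < e^{-\delta}$, i.e. that $X > e^{\delta}\mu$ or $X < e^{-\delta}\mu$. The key observation is that this event is \emph{contained} in the event $\{|X - \mu| > (1 - e^{-\delta})\mu\}$: indeed $X < e^{-\delta}\mu$ gives $\mu - X > (1 - e^{-\delta})\mu$, while $X > e^{\delta}\mu > (2 - e^{-\delta})\mu$ (using $e^{\delta} \ge 2 - e^{-\delta}$, which is just $e^{\delta} + e^{-\delta} \ge 2$) gives $X - \mu > (1 - e^{-\delta})\mu$. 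So both tails force a deviation of at least $(1 - e^{-\delta})\mu$ from the mean.

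\emph{Carrying this out.} First I would fix notation $\mu = E(X)$ and, implicitly assuming $\mu > 0$ so that the logarithm makes sense, observe that $1 - e^{-\delta} > 0$ whenever $\delta > 0$. Then I would establish the set inclusion
$$
\left\{\left|\log\!\left(\tfrac{X}{\mu}\right)\right| > \delta\right\}
\subseteq
\left\{\,|X - \mu| > (1 - e^{-\delta})\,\mu\,\right\}
$$
by the two-case argument above, being slightly careful that on the upper tail one only needs the cheaper bound $X - \mu > (e^{\delta} - 1)\mu \ge (1 - e^{-\delta})\mu$, which holds since $e^{\delta} - 1 \ge 1 - e^{-\delta}$. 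Taking probabilities and then applying the standard Chebyshev inequality $\Prob(|X - \mu| > t) \le \Var(X)/t^{2}$ with $t = (1 - e^{-\delta})\mu$ yields
$$
\Prob\!\left(\left|\log\!\left(\tfrac{X}{\mu}\right)\right| > \delta\right)
\le \frac{\Var(X)}{(1 - e^{-\delta})^{2}\,\mu^{2}}
= \frac{1}{(1 - e^{-\delta})^{2}} \cdot \frac{\Var(X)}{E(X)^{2}},
$$
which is the claimed bound.

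\emph{Main obstacle.} There is no serious analytic difficulty here; the work is entirely in getting the reduction inequality sharp enough that the constant $(1 - e^{-\delta})^{2}$ comes out, rather than a weaker constant. The only subtle point is handling the upper tail $X > e^{\delta}\mu$ without throwing away too much: one must recognize that it suffices to bound it by the symmetric deviation $(1 - e^{-\delta})\mu$ (the smaller of the two radii), since the lower tail is the binding constraint. I would also note in passing the standing assumptions that make the statement meaningful: $E(X) > 0$ (so the ratio and its logarithm are defined) and $\Var(X) < \infty$, both of which hold for the random variables $X_n^f$ to which this lemma will be applied.
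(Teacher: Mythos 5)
Your proof is correct and follows essentially the same route as the paper: both reduce the event $\bigl\{\lvert\log(X/E(X))\rvert>\delta\bigr\}$ to the deviation event $\bigl\{\lvert X-E(X)\rvert>(1-e^{-\delta})E(X)\bigr\}$ and then apply the standard Bienaym\'e--Chebyshev inequality. Your explicit verification that the upper tail $X>e^{\delta}\mu$ also forces a deviation of at least $(1-e^{-\delta})\mu$ (via $e^{\delta}+e^{-\delta}\ge 2$) is a detail the paper leaves implicit, but the argument is the same.
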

\begin{proof}
Well, let us compute
\begin{align*}
\Prob\left(\left\vert\log\left(\frac{X}{E(X)}\right)\right\vert>\delta\right)
&=\Prob\left(\frac{X}{E(X)}\notin[e^{-\delta},e^{\delta}]\right)\\
&\le\Prob\left(\left\vert\frac{X}{E(X)}-1\right\vert>1-e^{-\delta}\right)\\
&\le\Prob\left(\left\vert X-E(X)\right\vert>(1-e^{-\delta})\cdot E(X)\right)
\end{align*}
Setting 
$$k=\frac{(1-e^{-\delta})\cdot E(X)}{\sqrt{\Var(X)}}$$ 
in the standard  Bienaym\'e-Chebysheff inequality
$$\Prob\left(\vert X-E(X)\vert\ge k\cdot\sqrt{\Var(X)}\right)\le k^{-2}$$ 
we get 
$$\Prob\left(\left\vert\log\left(\frac{X}{E(X)}\right)\right\vert>\delta\right)\le \frac{\Var(X)}{(1-e^{-\delta})^2\cdot E(X)^2}$$
as we had claimed.
\end{proof}

\subsection{The function we care about}

We are going to be interested in all of this in the case that $f=f_\epsilon$ is the characteristic function of $DM(\epsilon)$, that is
\begin{equation}\label{eq function}
f_\epsilon(x,y)=\left\{\begin{array}{cl}
1 & \text{ if }\vert x-y\vert\le\epsilon \\
0 & \text{ otherwise}
\end{array}\right.
\end{equation}
This function satisfies that 
$$E(f_\epsilon)=\frac{\vol(DM(\epsilon))}{\vol(M)^2}$$
and hence we get that
$$E(X_n^{f_\epsilon})=n(n-1)\cdot\frac{\vol(DM(\epsilon))}{\vol(M)^2}$$
for all $n\ge 2$. Besides knowing the expectation, to be able to use Lemma \ref{prop log chebysheff} when we need to know, or at least estimate, is the quantity $\frac{\Var}{E^2}$. To apply \eqref{eq var/er2} we need first to be able to estimate the variance and covariance of $f_\epsilon$. Well, since $f_\epsilon$ only takes the values $0$ and $1$, the variance is easily calculated:
$$\Var(f_\epsilon)=E(f_\epsilon)-E(f_\epsilon)^2=\frac{\vol(DM(\epsilon))}{\vol(M)^2}-\left(\frac{\vol(DM(\epsilon))}{\vol(M)^2}\right)^2$$
When it comes to the covariance we have
\begin{align*}
\cov(f_\epsilon)
&=\int\left(\frac{\vol(\BB^M(x,\epsilon))}{\vol(M)}\right)^2d\Prob(x)-\frac{\vol(DM(\epsilon))^2}{\vol(M)^4}\\
&=\int\left(\frac{\vol(\BB^M(x,\epsilon))}{\vol(M)}\right)^2d\Prob(x)-\left(\int_M\frac{\vol(DM(\epsilon))}{\vol(M)}d\Prob(x)\right)^2\\
&=\Var\left(x\mapsto \frac{\vol(\BB^M(x,\epsilon))}{\vol(M)}\right)
\end{align*}
This means that when the volume of $\BB^M(x,\epsilon)=M\cap\BB(x,\epsilon)$ is constant then the covariance vanishes. This is for example the case for $M=\BS^d\subset\BR^{d+1}$ or for the Clifford torus $M=\BT^d\subset\BR^{2d}$. However, in general we do not get anything better than the bound coming from Popoviciu's inequality, that is
$$\cov(f_\epsilon)\le\frac{\left(V^M_{\max}(\epsilon)-V^M_{\min}(\epsilon)\right)^2}{4\cdot\vol(M)^2}$$
where we have set
$$V_{\max}^M(\epsilon)=\max_{x\in M}\vol(\BB^M(x,\epsilon))\text{ and }V_{\min}^M(\epsilon)=\min_{x\in M}\vol(\BB^M(x,\epsilon))$$
Now, using \eqref{eq var/er2}, the bound for $\cov(f_\epsilon)$, as well as the bound $\vol(DM(\epsilon))\ge\vol(M)\cdot V_{\min}(\epsilon)$ we get that
$$\frac{\Var(X_n^{f_\epsilon})}{E(X_n^{f_\epsilon})^2}\le 
\frac 2{(n-1)^2}\cdot\frac{\vol(M)}{V^M_{\min}(\epsilon)}+\frac 1{n-1}\cdot\left(\frac{V_{\max}^M(\epsilon)}{V_{\min}^M(\epsilon)}-1\right)^2$$
To get bounds that only depend on the dimension and on $\epsilon<2$ recall that from Corollary \ref{kor volume of balls} we get that
\begin{equation}\label{eq weird notation}
\begin{split}
V^M_{\min}(\epsilon)&\ge \vol(B^{\BS^d}(\epsilon))\stackrel{\text{def}}=\CV(\epsilon)\\
\frac{V_{\max}^M(\epsilon)}{V_{\min}^M(\epsilon)}&\le \frac{\vol\left(B^{\frac 1{\sqrt 2}\BH^d}\left(2\arcsin\left(\frac \epsilon 2\right)\right)\right)}{\vol(B^{\BS^d}(\epsilon))}\stackrel{\text{def}}=\CR(\epsilon)
\end{split}
\end{equation}
Using these bounds we get
$$\frac{\Var(X_n^{f_\epsilon})}{E(X_n^{f_\epsilon})^2}\le\frac 2{(n-1)^2}\cdot\frac{\vol(M)}{\CV(\epsilon)}+\frac 1{n-1}\cdot\left(\CR(\epsilon)-1\right)^2$$
We record what we have so far:

\begin{lem}\label{lem listening to labordeta}
Let $M\subset\BR^s$ be a closed submanifold with dimension $\dim(M)=d$ and reach $\tau(M)\ge 1$, and for some $\epsilon<2$ and $n\in\BN$ consider $f_\epsilon$ and $X_n^{f_\epsilon}$ as in \eqref{eq function} and \eqref{eq our random variables}. Then we have
\begin{align*}
E(X_n^{f_\epsilon})=&n(n-1)\cdot\frac{\vol(DM(\epsilon))}{\vol(M)^2}\\
\frac{\Var(X_n^{f_\epsilon})}{E(X_n^{f_\epsilon})^2}\le&\frac 2{(n-1)^2}\cdot\frac{\vol(M)}{\CV(\epsilon)}+\frac 1{n-1}\cdot\left(\CR(\epsilon)-1\right)^2
\end{align*}
where $\CV(\epsilon)$ and $\CR(\epsilon)$ are as in \eqref{eq weird notation}.
\qed
\end{lem}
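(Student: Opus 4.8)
The expectation is immediate. The plan for the first identity is to specialize the general formula $E(X_n^f)=n(n-1)\cdot E(f)$ to the characteristic function $f=f_\epsilon$ of $DM(\epsilon)$, using that $E(f_\epsilon)=\vol(DM(\epsilon))/\vol(M)^2$; by \eqref{eq volume thick diagonal as integral} this expectation is nothing but the correlation integral at scale $\epsilon$, so no computation beyond substitution is required.

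For the variance ratio I would start from the exact expression \eqref{eq var/er2}, which writes $\Var(X_n^{f_\epsilon})/E(X_n^{f_\epsilon})^2$ as $\frac{2}{n(n-1)}\cdot\frac{\Var(f_\epsilon)}{E(f_\epsilon)^2}+\frac{4(n-2)}{n(n-1)}\cdot\frac{\cov(f_\epsilon)}{E(f_\epsilon)^2}$, and bound the two terms separately. For the first term, since $f_\epsilon$ is $\{0,1\}$-valued one has $\Var(f_\epsilon)=E(f_\epsilon)-E(f_\epsilon)^2$, hence $\Var(f_\epsilon)/E(f_\epsilon)^2\le 1/E(f_\epsilon)$; combined with the lower bound $E(f_\epsilon)\ge V_{\min}^M(\epsilon)/\vol(M)\ge\CV(\epsilon)/\vol(M)$ read off from \eqref{eq weird notation} (that is, from the ball-volume lower bound in Corollary \ref{kor volume of balls}), this yields $\frac{2}{n(n-1)}\cdot\frac{\Var(f_\epsilon)}{E(f_\epsilon)^2}\le\frac{2}{n(n-1)}\cdot\frac{\vol(M)}{\CV(\epsilon)}$, and I would conclude by the trivial inequality $n(n-1)\ge (n-1)^2$ to reach the first summand.

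The covariance term is where the only genuinely non-routine step sits. First I would establish the cancellation identifying $\cov(f_\epsilon)$ with the ordinary variance of the single-variable function $x\mapsto \vol(\BB^M(x,\epsilon))/\vol(M)$: conditioning on the first argument, the two copies $f_\epsilon(x_1,x_2)$ and $f_\epsilon(x_1,x_3)$ are conditionally independent with conditional mean $\vol(\BB^M(x_1,\epsilon))/\vol(M)$, so the law of total covariance collapses the expression exactly as in the display preceding the lemma. Since that function takes values in $[V_{\min}^M(\epsilon)/\vol(M),\,V_{\max}^M(\epsilon)/\vol(M)]$, Popoviciu's inequality bounds its variance by $(V_{\max}^M(\epsilon)-V_{\min}^M(\epsilon))^2/(4\vol(M)^2)$. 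Dividing by $E(f_\epsilon)^2$ and using $\vol(DM(\epsilon))\ge \vol(M)\cdot V_{\min}^M(\epsilon)$ cancels all the $\vol(M)$ factors and produces $\cov(f_\epsilon)/E(f_\epsilon)^2\le \tfrac14\big(V_{\max}^M(\epsilon)/V_{\min}^M(\epsilon)-1\big)^2\le \tfrac14(\CR(\epsilon)-1)^2$, the last step again invoking \eqref{eq weird notation}.

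To finish, I would plug both bounds into \eqref{eq var/er2} and absorb the prefactor of the covariance term via $\tfrac14\cdot\frac{4(n-2)}{n(n-1)}\le\frac{1}{n-1}$, which holds because $n-2\le n$. The hard part is thus not any single estimate but recognizing the covariance cancellation; everything downstream is a mechanical chain of inequalities whose geometric content is entirely supplied by Corollary \ref{kor volume of balls}. I note in particular that the covariance term vanishes precisely when $\vol(\BB^M(x,\epsilon))$ is constant in $x$, as happens for spheres and Clifford tori, which is consistent with the bound being driven by the ratio $\CR(\epsilon)$.
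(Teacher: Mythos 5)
Your proposal is correct and follows essentially the same route as the paper: the same decomposition via \eqref{eq var/er2}, the same identification of $\cov(f_\epsilon)$ with the variance of $x\mapsto\vol(\BB^M(x,\epsilon))/\vol(M)$ (you via the law of total covariance, the paper by writing out the integral directly -- the same cancellation), Popoviciu's inequality, the bound $\vol(DM(\epsilon))\ge\vol(M)\cdot V^M_{\min}(\epsilon)$, and the same elementary arithmetic in $n$. Nothing is missing; your write-up merely makes explicit a few inequality steps the paper leaves implicit.
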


Before moving any further let us give explicit formulas for $\CV(\epsilon)$ and $\CR(\epsilon)$:
\begin{equation}\label{eq weird notation2}
\begin{split}
\CV(\epsilon)&=\vol(\BS^{d-1})\cdot\int_0^\epsilon\sin^{d-1}(t)dt\\
\CR(\epsilon)&=\frac{2^{-\frac d2}\int_0^{2\sqrt 2\arcsin\frac\epsilon 2}\sinh(t)^{d-1}dt}{\int_0^\epsilon\sin^{d-1}(t)dt}
\end{split}
\end{equation}

Note also that the two summands in the bound for $\frac{\Var(X_n^{f_\epsilon})}{E(X_n^{f_\epsilon})^2}$ in Lemma \ref{lem listening to labordeta} are rather different. Assume for example that $d$ is fixed. Then the weight of the second factor decreases when $\epsilon$ decreases. On the other hand the value of the first one explodes. Recall also that the second factor can be ignored if $\cov$ vanishes, that is if all balls $\BB^M(x,\epsilon)$ have the same volume.

\subsection{Some technical results}
Recall that to estimate the dimension of $M$ via \eqref{eq corrsum intro}, or equivalently via \eqref{eq corrsum sec2}, what we do is to take, for two scales $\epsilon_1>\epsilon_2$ random values of $X^{\epsilon_1}_n=X^{f_{\epsilon_1}}_n$ and $X^{\epsilon_2}_n=X^{f_{\epsilon_2}}_n$, compute
$$\frac{\log \frac{X^{\epsilon_1}_n}{n(n-1)}-\log\frac{X^{\epsilon_2}_n}{n(n-1)}}{\log\epsilon_1-\log\epsilon_2}$$
and hope the that obtained value has something to do with $\dim(M)$. Well, what we get from Lemma \ref{prop log chebysheff} is an estimate of the probability that this value is far from the expectation. Indeed, since we have 
\begin{align*}
&\Prob\left(\left\vert\frac{\log X^{\epsilon_1}_n-\log X^{\epsilon_2}_n}{\log\epsilon_1-\log\epsilon_2}-\frac{\log \vol(DM(\epsilon_1))-\log \vol(DM(\epsilon_2))}{\log\epsilon_1-\log\epsilon_2}\right\vert>\rho\right)\\
&\phantom{blablabla}=\Prob\left(\left\vert\frac{\log X^{\epsilon_1}_n-\log X^{\epsilon_2}_n}{\log\epsilon_1-\log\epsilon_2}-\frac{\log E(X^{\epsilon_1}_n)-\log E(X^{\epsilon_2}_n)}{\log\epsilon_1-\log\epsilon_2}\right\vert>\rho\right)\\
&\phantom{blablabla}=\Prob\left(\left\vert\log\left(\frac{X^{\epsilon_1}_n}{E(X^{\epsilon_1}_n)}\right)-\log\left(\frac{X^{\epsilon_2}_n}{E(X^{\epsilon_2}_n)}\right)\right\vert>\log\left(\left(\frac{\epsilon_1}{\epsilon_2}\right)^\rho\right)\right)\\
&\phantom{blablabla}\le\sum_{i=1,2}\Prob\left(\left\vert\log\left(\frac{X^{\epsilon_i}_n}{E(X^{\epsilon_i}_n)}\right)\right\vert>\log\left(\left(\frac{\epsilon_1}{\epsilon_2}\right)^{\frac \rho2}\right)\right)\\
&\phantom{blablabla}\le\frac 1{\left(1-\left(\frac{\epsilon_2}{\epsilon_1}\right)^{\frac \rho2}\right)^2}\sum_{i=1,2} 
\frac{\Var(X_n^{\epsilon_i})}{E(X_n^{\epsilon_i})^2}
\end{align*}
Plugging in the statement of Lemma \ref{lem listening to labordeta} we get:

\begin{sat}\label{sat corrsum probability}
Let $M\subset\BR^s$ be a closed submanifold with reach $\tau(M)\ge 1$, pick two scales $0<\epsilon_2<\epsilon_1$. Also, for $n\ge 2$ set
$$\rho=
\sum_{i=1,2}\left(\frac 2{(n-1)^2}\cdot\frac{\vol(M)}{\CV(\epsilon_i)}+\frac 1{n-1}\cdot\left(\CR(\epsilon_i)-1\right)^2\right)$$
Then we have
$$\Prob\left(\left\vert\frac{\log X^{\epsilon_1}_{n}-\log X^{\epsilon_2}_{n}}{\log\epsilon_1-\log\epsilon_2}-\frac{\log\left(\frac{\vol(DM(\epsilon_1))}{\vol(DM(\epsilon_2))}\right)}{\log\epsilon_1-\log\epsilon_2}\right\vert>\Delta\right)\le \frac \rho{\left(1-\left(\frac{\epsilon_2}{\epsilon_1}\right)^{\frac\Delta 2}\right)^2}$$
for any $\delta>0$. \qed
\end{sat}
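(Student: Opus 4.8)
The plan is to assemble the two estimates already in place, namely the log-Chebyshev bound of Lemma \ref{prop log chebysheff} and the variance estimate of Lemma \ref{lem listening to labordeta}. The first point is that the deterministic quantity being compared against is exactly the log-ratio of expectations. Indeed, Lemma \ref{lem listening to labordeta} gives $E(X^{\epsilon_i}_n) = n(n-1)\vol(DM(\epsilon_i))/\vol(M)^2$, so the common prefactor $n(n-1)/\vol(M)^2$ cancels upon taking a difference of logarithms:
$$\frac{\log E(X^{\epsilon_1}_n) - \log E(X^{\epsilon_2}_n)}{\log\epsilon_1-\log\epsilon_2} = \frac{\log\vol(DM(\epsilon_1)) - \log\vol(DM(\epsilon_2))}{\log\epsilon_1-\log\epsilon_2}.$$
This lets me rewrite the event to be controlled purely in terms of the normalized variables $X^{\epsilon_i}_n/E(X^{\epsilon_i}_n)$.

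Next I would reduce the two-scale deviation to two single-scale deviations. Multiplying through by $\log\epsilon_1-\log\epsilon_2 = \log(\epsilon_1/\epsilon_2) > 0$, the event in the statement becomes $\vert\log(X^{\epsilon_1}_n/E(X^{\epsilon_1}_n)) - \log(X^{\epsilon_2}_n/E(X^{\epsilon_2}_n))\vert > \log((\epsilon_1/\epsilon_2)^{\Delta})$. By the triangle inequality, if this holds then one of the two summands exceeds half the threshold, so a union bound yields
$$\Prob(\text{deviation}>\Delta) \le \sum_{i=1,2}\Prob\left(\left\vert\log\frac{X^{\epsilon_i}_n}{E(X^{\epsilon_i}_n)}\right\vert > \log\left(\left(\frac{\epsilon_1}{\epsilon_2}\right)^{\frac\Delta 2}\right)\right).$$

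I would then apply Lemma \ref{prop log chebysheff} to each summand with $\delta = \log((\epsilon_1/\epsilon_2)^{\Delta/2})$; here $e^{-\delta} = (\epsilon_2/\epsilon_1)^{\Delta/2}$, which produces precisely the denominator $(1-(\epsilon_2/\epsilon_1)^{\Delta/2})^2$ of the asserted bound. The final step is to insert the variance estimate of Lemma \ref{lem listening to labordeta}: the sum $\sum_{i=1,2}\Var(X^{\epsilon_i}_n)/E(X^{\epsilon_i}_n)^2$ is bounded above by the quantity $\rho$ defined in the theorem, and the proof is complete.

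Since every ingredient has already been established, there is no deep obstacle; the only thing requiring care is the bookkeeping of the thresholds, that is splitting $\log((\epsilon_1/\epsilon_2)^{\Delta})$ into two equal halves so that $1-e^{-\delta}$ matches $1-(\epsilon_2/\epsilon_1)^{\Delta/2}$, and checking that the cancellation in the first step is legitimate. The latter amounts to the expectations being nonzero, which is automatic since $\vol(DM(\epsilon_i)) > 0$ for $\epsilon_i > 0$, so that the normalized ratios and their logarithms are well defined.
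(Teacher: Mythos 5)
Your proposal is correct and follows essentially the same route as the paper: cancel the common prefactor $n(n-1)/\vol(M)^2$ so that the target quantity is the log-ratio of expectations, split the two-scale deviation into two single-scale deviations at half the threshold via the triangle inequality and a union bound, apply Lemma \ref{prop log chebysheff} with $\delta=\log\big((\epsilon_1/\epsilon_2)^{\Delta/2}\big)$, and finally insert the bound on $\sum_{i}\Var(X_n^{\epsilon_i})/E(X_n^{\epsilon_i})^2$ from Lemma \ref{lem listening to labordeta}. Your bookkeeping of the thresholds and the remark that the expectations are nonzero match the paper's displayed chain of (in)equalities exactly.
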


Let us get an slightly more user friendly version:

\begin{kor}\label{kor more humane}
With the same assumptions and notation as in Theorem \ref{sat corrsum probability} suppose that for some positive $\alpha_i$'s with $\alpha_1+\alpha_2=1$ and for some $\rho>0$ we have
$$n\ge 1+\frac 1{\alpha_i\cdot\rho}\cdot\left(\CR(\epsilon_i)-1\right)^2+\sqrt{\frac 2{\alpha_i\cdot\rho}\cdot\frac{\vol(M)}{\CV(\epsilon_i)}}$$
for $i=1,2$. Then we also have
$$P\left(\left\vert\frac{\log X^{\epsilon_1}_{n}-\log X^{\epsilon_2}_{n}}{\log\epsilon_1-\log\epsilon_2}-\frac{\log\left(\frac{\vol(DM(\epsilon_1))}{\vol(DM(\epsilon_2))}\right)}{\log\epsilon_1-\log\epsilon_2}\right\vert>\Delta\right)\le \rho\cdot\left(1-\left(\frac{\epsilon_2}{\epsilon_1}\right)^{\frac\Delta 2}\right)^{-2}$$
\end{kor}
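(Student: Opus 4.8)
The plan is to deduce this directly from Theorem \ref{sat corrsum probability}. The key point is that the quantity denoted $\rho$ in that theorem is a specific function of $n$, the two scales, and the geometry of $M$---namely the sum
\begin{equation*}
\rho_{\mathrm{thm}}=\sum_{i=1,2}\left(\frac 2{(n-1)^2}\cdot\frac{\vol(M)}{\CV(\epsilon_i)}+\frac 1{n-1}\cdot\left(\CR(\epsilon_i)-1\right)^2\right),
\end{equation*}
whereas the $\rho$ appearing in the corollary is a free parameter. Since the bound furnished by Theorem \ref{sat corrsum probability} is monotone increasing in its numerator (its denominator $\left(1-\left(\frac{\epsilon_2}{\epsilon_1}\right)^{\Delta/2}\right)^2$ is positive because $\epsilon_2<\epsilon_1$), it suffices to show that under the stated lower bound on $n$ one has $\rho_{\mathrm{thm}}\le\rho$; substituting this inequality into the theorem then yields the corollary.

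To carry this out, first I would use the normalization $\alpha_1+\alpha_2=1$ to split the target additively: it is enough to establish, for each $i\in\{1,2\}$ separately, the estimate
\begin{equation*}
\frac 2{(n-1)^2}\cdot\frac{\vol(M)}{\CV(\epsilon_i)}+\frac 1{n-1}\cdot\left(\CR(\epsilon_i)-1\right)^2\le\alpha_i\cdot\rho,
\end{equation*}
because adding the two resulting inequalities gives $\rho_{\mathrm{thm}}\le(\alpha_1+\alpha_2)\rho=\rho$. Writing $m=n-1$ and abbreviating $a_i=\frac{2\vol(M)}{\CV(\epsilon_i)}$, $b_i=\left(\CR(\epsilon_i)-1\right)^2$ and $c_i=\alpha_i\rho$, each such estimate reads $\frac{a_i}{m^2}+\frac{b_i}{m}\le c_i$, that is $c_im^2-b_im-a_i\ge 0$. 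Introducing $u_i=b_i/c_i$ and $s_i=\sqrt{a_i/c_i}$, this becomes $m^2-u_im-s_i^2\ge 0$, while the hypothesis on $n$ is exactly $m\ge u_i+s_i$.

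The single genuine computation is then the verification of this quadratic inequality, which is immediate. Since $a_i,b_i\ge 0$ and $c_i=\alpha_i\rho>0$, the numbers $u_i,s_i$ are nonnegative, and from $m\ge u_i+s_i$ we get both $m\ge u_i+s_i$ and $m-u_i\ge s_i$, whence
\begin{equation*}
m^2-u_im-s_i^2=m(m-u_i)-s_i^2\ge(u_i+s_i)\,s_i-s_i^2=u_is_i\ge 0.
\end{equation*}
This confirms each per-index estimate, so $\rho_{\mathrm{thm}}\le\rho$, and feeding this back into Theorem \ref{sat corrsum probability} produces the asserted bound. I do not anticipate any real obstacle: the quantity $u_i+s_i$ chosen in the hypothesis is engineered precisely so as to dominate the sharp root $\frac{b_i+\sqrt{b_i^2+4a_ic_i}}{2c_i}$ beyond which $c_im^2-b_im-a_i\ge 0$ holds, and the one-line estimate above is exactly the elementary fact that makes this domination work.
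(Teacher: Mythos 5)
Your proposal is correct and follows essentially the same route as the paper: split the bound from Theorem \ref{sat corrsum probability} into the two summands using $\alpha_1+\alpha_2=1$ and reduce each to a quadratic inequality in $n-1$. If anything you are slightly more careful than the paper, which calls $\frac ba+\sqrt{c/a}$ ``the positive solution'' of the quadratic when it is really only an upper bound for that root; your direct verification that $m\ge u_i+s_i$ forces $m^2-u_im-s_i^2\ge 0$ closes that small imprecision.
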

\begin{proof}
In terms of Theorem \ref{sat corrsum probability} what we have to do is to guarantee for $i=1,2$ that 
$$\alpha_i\rho\ge\frac 2{(n-1)^2}\cdot\frac{\vol(M)}{\CV(\epsilon_i)}+\frac 1{n-1}\cdot\left(\CR(\epsilon_i)-1\right)^2$$
It thus suffices to ensure that $n-1$ is larger than the solution $X$ of the equation
$$a\stackrel{def}=\alpha_i\rho=\frac 1{X^2}\cdot\frac{2\cdot\vol(M)}{\CV(\epsilon_i)}+\frac 1{X}\cdot\left(\CR(\epsilon_i)-1\right)^2\stackrel{def}=\frac 1{X^2}c+\frac 1Xb$$
This is now a quadratic equation with positive solution 
$$X=\frac ba+\sqrt{\frac ca}$$
The claim follows.
\end{proof}

Again, if $M$ is such that all balls $\BB^M(x,\epsilon)$ have constant volume, then one can replace the first displayed equation in Corollary \ref{kor more humane} by
$$n\ge 1+\sqrt{\frac 2{\alpha_i\cdot\rho}\cdot\frac{\vol(M)}{\CV(\epsilon_i)}}.$$

\subsection{Searching decent scales}\label{sec: searche decent scales}

Given the dimension $d$ and the volume $\Vol(M)$, what are the optimal scales to run \eqref{eq corrsum intro} so that we have $\ge 90\%$ success probability? Let's see how we could find, if not the optimal scales, at least decent ones. First, for $1>\epsilon_1>\epsilon_2>0$ consider the quantity
$$\Delta=\Delta_{\epsilon_1,\epsilon_2}=\max\{\Delta_1,\Delta_2\}$$
where 
\begin{align*}
\Delta_1&\le\frac 12-\frac{\log\left(\frac{\int_0^{\sqrt 2\cdot 2\cdot\arcsin(\frac {\epsilon_1}2)}\sinh^{d-1}(t)\ dt}{\int_0^{\sqrt 2\cdot {\epsilon_2}}\sinh^{d-1}(t)\ dt}\right)}{\log\frac{\epsilon_1}{\epsilon_2}}+d\\
\Delta_1&\le\frac 12+\frac{\log\left(\frac{\epsilon_1}{2\cdot\arcsin(\frac {\epsilon_2}2)}\cdot\left(\frac{\sin(\epsilon_1)}{\sin(2\cdot\arcsin\frac {\epsilon_2}2)}\right)^{d-1}\right)}{\log\frac{\epsilon_1}{\epsilon_2}}-d
\end{align*}
From Lemma \ref{lem listening to labordeta} and Theorem \ref{sat pain in the ass} we get for all $n$ that 
$$\left\vert \frac{\log \frac{E(X_n^{\epsilon_1})}{E(X_n^{\epsilon_2})}}{\log\epsilon_1-\log\epsilon_2}
-d\right\vert
=\left\vert \frac{\log \frac{\vol(DM(\epsilon_1))}{\vol(DM(\epsilon_2))}}{\log\epsilon_1-\log\epsilon_2}-d\right\vert\le\frac 12-\Delta$$
Note that Theorem \ref{sat pain in the ass} has the condition $R<1$, but for dimension 1, it be replaced by the condition $R<2$, as explained in the proof of Corollary \ref{kor volume of balls}.

It follows that, as long as $\Delta>0$, if we take a very large number of points $n$ then we get that it is very likely that the \eqref{eq corrsum intro} returns the value $d$. Now, how many points we do actually need if we want to guarantee a $90\%$ rate of success? Well, with notation as in Theorem \ref{sat corrsum probability} we start by setting 
\begin{equation}\label{eq condition rho}
\rho=\rho_{\epsilon_1,\epsilon_2}=\frac 1{10}\cdot\left(1-\left(\frac{\epsilon_2}{\epsilon_1}\right)^{\frac\Delta 2}\right)^2
\end{equation}
Now, once we have $\rho$ we get from Corollary \ref{kor more humane} that if we take $\alpha\in(0,1)$, set $\alpha_1=\alpha$ and $\alpha_2=1-\alpha$, and if we take at least
\begin{align*}
n(\epsilon_1,\epsilon_2,\alpha,\vol(M))
&=\max_{i=1,2}\left(1+\frac 1{\alpha_i\cdot\rho}\cdot\left(\CR(\epsilon_i)-1\right)^2+\sqrt{\frac 2{\alpha_i\cdot\rho}\cdot\frac{\vol(M)}{\CV(\epsilon_i)}}\right)\\
&\le 1+\max_{i=1,2}\left(\frac 1{\alpha_i\cdot\rho}\cdot\left(\CR(\epsilon_i)-1\right)^2\right)+\\
&\phantom{BLABLABLA}+\left(\max_{i=1,2}\sqrt{\frac 2{\alpha_i\cdot\rho\cdot\CV(\epsilon_i)^2}}\right)\cdot\vol(M)^{\frac 12}
\end{align*}
points, then 
\begin{equation}\label{eq prob 10 percent}
\Prob\left(\left\vert \frac{\log \frac{X_n^{{\epsilon_1}}}{X_n^{{\epsilon_2}}}}{\log\epsilon_1-\log\epsilon_2}
-d\right\vert<\frac 12 d\right)>90\%.
\end{equation}
If we are interested in manifolds with $\Vol(M)\le V$ then \eqref{eq prob 10 percent} holds as long as we take at least
\begin{equation}\label{eq number of points}
\min_{\tiny\begin{array}{l}1>\epsilon_1>\epsilon_2>0\\ \text{with }\Delta_{\epsilon_1,\epsilon_2}>0\end{array}}\min_{\alpha\in(0,1)}n(\epsilon_1,\epsilon_2,\alpha)
\end{equation}
points and we use \eqref{eq corrsum intro} with constants $1>\epsilon_1>\epsilon_2>0$. Now, to find decent scales we can now minimize \eqref{eq number of points}. A program which numerically approximates that is available at \cite{program}. 

In fact, running also the program in each $d=1,2,\dots,10$ for the volume of the corresponding $d$-dimensional torus we get that $(\epsilon_1,\epsilon_2,\alpha)$ as in Table \ref{table scales} give smallish values for \eqref{eq number of points}. If we plug these constants in the formula for $n(\epsilon_1,\epsilon_2,\alpha,\vol(M))$ that we gave above we recover the statement of Theorem \ref{sat humane formula for number of points} stated in the introduction:

\begin{table}
\begin{tabular}{|l|l|l|l|l|l|l|l|l|}
\cline{1-4}\cline{6-9}
$d$ & $\epsilon_1$ & $\epsilon_2$ & $\alpha_1$ & & $d$ & $\epsilon_1$ & $\epsilon_2$ & $\alpha_1$\\ \cline{1-4}\cline{6-9}
 1 & $1.5$ & $0.19$ & $0.15$ & \phantom{BLABLABLA} &  6  & $0.4$ & $0.21$ & $0.03$ \\ \cline{1-4}\cline{6-9}
 2  & $0.78$  & $0.2$ & $0.11$ &  & 7 & $0.36$ & $0.21$ & $0.03$\\ \cline{1-4}\cline{6-9}
 3 & $0.63$ & $0.23$ & $0.09$ & &  8 & $0.33$ & $0.2$ & $0.02$ \\ \cline{1-4}\cline{6-9}
 4  & $0.54$& $0.23$ & $0.06$ & &  9 & $0.31$ & $0.19$ & $0.02$ \\ \cline{1-4}\cline{6-9}
 5 & $0.46$ & $0.22$ & $0.04$ & &  10 & $0.29$ & $0.18$ & $0.01$ \\ \cline{1-4}\cline{6-9}
\end{tabular}
\medskip
\caption{Decent scales for $\vol=\vol(\BT^d)$ in dimension $d=1,2,\dots,10$. It is evident that the values in Table \ref{table scales} can change if instead of using $\vol(\BT^d)$ as an input one chooses any other value. However, for whatever it is worth, if instead one chooses $10\cdot\vol(\BT^d)$ or even $100\cdot\vol(\BT^d)$ then nor much changes: in small dimensions (that is, up to dimension 3) the scales increase a bit, but for dimensions at least $4$ nothing changes.
}\label{table scales}
\end{table}

\begin{named}{Theorem \ref{sat humane formula for number of points}}
For $d=1,\cdots,10$ let $\epsilon_1$ and $\epsilon_2$ be scales as in the table below. Also, given a closed $d$-dimensional manifold $M\subset\BR^s$ with reach $\tau(M)\ge 1$ let $n$ be also as in the following table:
\begin{table}[H]
\begin{tabular}{|l|l|l|l|}
\hline
$d$ & $\epsilon_1$ & $\epsilon_2$ & n \\ \hline
$1$ & $1.5$ & $0.19$ & $9+21\cdot\vol(M)^{\frac 12}$ \\ \hline
 $2$  & $0.78$  & $0.2$ & $94+58\cdot\vol(M)^{\frac 12}$ \\ \hline
 $3$ & $0.63$ & $0.23$ & $635+146\cdot\vol(M)^{\frac 12}$ \\ \hline
 $4$  & $0.54$& $0.23$ & $2786+392\cdot\vol(M)^{\frac 12}$ \\ \hline
 $5$ & $0.46$ & $0.22$ & $7013+1119\cdot\vol(M)^{\frac 12}$ \\ \hline
 $6$  & $0.4$ & $0.21$ & $13221+3366\cdot\vol(M)^{\frac 12}$ \\ \hline
 $7$ & $0.36$ & $0.21$ & $25138+10644\cdot\vol(M)^{\frac 12}$ \\ \hline
 $8$ & $0.33$ & $0.2$ & $50033+34890\cdot\vol(M)^{\frac 12}$ \\ \hline
\end{tabular}
\end{table}
\begin{table}[H]
\begin{tabular}{|l|l|l|l|}
\hline
$d$ & $\epsilon_1$ & $\epsilon_2$ & n \\ \hline
 $9$ & $0.31$ & $0.19$ & $63876+119533\cdot\vol(M)^{\frac 12}$ \\ \hline
 $10$ & $0.29$ & $0.18$ & $139412+425554\cdot\vol(M)^{\frac 12}$ \\ \hline
\end{tabular}
\end{table}
Then, if we sample independently and according to the riemannian volume form a subset $X\subset M$ consisting of at least $n$ points, then we have 
$$\dim_{\Corr(\epsilon_1,\epsilon_2)}(X)=d$$
with at least $90\%$ probability.\qed
\end{named}

In the next section we discuss some (much smaller) heuristic bounds, discuss some numerical experiments, and compare with the performance of other estimators.

	\section{Heuristics}\label{sec: heur}

	Theorem \ref{sat corrsum probability} gives us a bound for the number of points needed in a data set to be able to get  from \eqref{eq corrsum intro} at least $90\%$ of the time its dimension. In concrete examples, we expect that this confidence level can be achieved with significantly less points. We will discuss this difference between theory and practice, suggesting a simpler heuristic model supported by computational examples.
	
\subsection*{Heuristic bound}
	We begin by discussing a heuristic model representing an ideal situation without curvature. More concretely we will be running \eqref{eq corrsum intro} at some scales $\epsilon_1$ and $\epsilon_2$ at which we can ignore curvature effects. In effect, we will act as if all balls in $M$ of radius at most $\epsilon_1$ were euclidean and totally geodesic. 

The first observation is that the statistic \eqref{eq corrsum intro} is computed from information extracted from the distances $\vert x-y\vert$ for those (unordered) pairs 
$$\{x,y\}\in PX(\epsilon)=\big\{\{x,y\}\subset X\text{ with }x\neq y\text{ with }\vert x-y\vert\le\epsilon\big\}$$
rather than from the points themselves. This means that the performance of the algorithm should depend on $\vert PX(\varepsilon_1)\vert$ and on the dimension $d$, instead of directly on the total number $n$ of points.

	We think of the distance $\vert x-y\vert$ for $\{x,y\}\in PX(\epsilon_1)$ as a random variable, and from now on, we will take the point of view that we have $N=\vert PX(\epsilon_1)\vert$ random variables $(X_i)_{1\leqslant i\leqslant N}$ given by taking the distance between pairs of points at distance at most $\varepsilon_1$ and sampled uniformly on $M$. We then consider the variables $Y_i$ equal to 1 if $X_i$ is smaller than $\varepsilon_2$ and 0 otherwise. With this notation in place, the estimator \eqref{eq corrsum intro} becomes
$$\dim_{\Corr(\epsilon_1,\epsilon_2)(X)}=\round\left(\frac{\log(\frac 1N\sum_{i=1}^NY_i)}{\log(\epsilon_2)-\log(\epsilon_1)}\right)$$
Note that since we are assuming that all the balls are euclidean, the mean value of $Y_i$ is then $E_d=(\varepsilon_2/\varepsilon_1)^d$ and as the variables $Y_i$ only take the values 0 and 1, their variance is $\sigma_d^2=E_d-E_d^2$.

In reality, the variables $X_i$, and thus the variables $Y_i$ have no reason to be independent. Still, most of them are when the volume is large when compared to the size of the data set. So, from now on, we will put ourselves in the ideal situation that the $N$ variables $Y_i$ are independent. Independence implies that the distribution of the sample mean $Z=\frac 1N\sum_i Y_i$ is binomial of parameters $N$ and $E_d$, which can be approximated by a normal distribution of mean value $E_d$ and of variance $\frac 1N\sigma_d^2$ using the central limit theorem. It is then known that the probability of $Z$ to be in the interval $[E_d-1.64\cdot\sigma_d/\sqrt{N}, E_d+1.64\cdot\sigma_d/\sqrt{N}]$ is about 90\%. If we set $\gap_d=\min\left(E_{d-0.5}-E_d, E_d-E_{d+0.5}\right)$, we want to find $N$ so that $1.64\cdot\sigma_d/\sqrt{N}=\gap_d$. This number gives a number of pairs sufficient to obtain the right dimension with a confidence of 90\%.
\medskip

Suppose for example that the manifold $M$ has dimension 4. For $\varepsilon_1=0.54$ and $\varepsilon_2=0.23$, the scales coming from Theorem \ref{sat humane formula for number of points}, we can then compute the values of $\gap_4$ and of $\sigma_4$ and deduce the required value of $N$.
	\begin{align*}
	\gap_4&=\min(E_{3.5}-E_{4},E_{4}-E_{4.5})\simeq 0.01143\\
	\sigma_4&=\sqrt{E_4-E_4^2}\simeq 0.1784\\
	N &=\left(1.64\cdot \sigma_4/\gap_4\right)^2\simeq 655 \text{ pairs}
	\end{align*}
	This is an approximation using the central limit theorem (that is, replacing the binomial distribution by the normal distribution), but more precise computation can be done working directly with the binomial distribution. Doing this, we can take $N$ down to 516 (see \cite{program}). This reasoning can be applied to obtain the required number of pairs for each dimension---the results are summerized in Table \ref{table heur bounds}. 
	\begin{table}[H]
		\begin{tabular}{|l|l|l|l|l|}
			\hline
			$d$ & $\epsilon_1$ & $\epsilon_2$ & {\bf $N$ for 90\%} & {\bf $N$ for 70\%}  \\ \hline
1 & $1.5$ & $0.19$ & $30$ & $10$ \\ \hline
 2  & $0.78$  & $0.2$ & $122$ & $40$ \\ \hline
 3 & $0.63$ & $0.23$ & $249$ & $111$ \\ \hline
 4  & $0.54$& $0.23$ & $516$ & $238$ \\ \hline
 5 & $0.46$ & $0.22$ & $878$ & $360$ \\ \hline
 6  & $0.4$ & $0.21$ & $1329$ & $554$ \\ \hline
 7 & $0.36$ & $0.21$ & $1719$ & $698$ \\ \hline
 8 & $0.33$ & $0.2$ & $2481$ & $1070$ \\ \hline
 9 & $0.31$ & $0.19$ & $3900$ & $1604$ \\ \hline
 10 & $0.29$ & $0.18$ & $5849$ & $2414$ \\ \hline
		\end{tabular}
\caption{Heuristic bounds for the size of $PX(\epsilon_1)$ needed to have $90\%$ and $70\%$ rate of success when applying \eqref{eq corrsum intro} to data sets sampled from a reach $1$ manifold.}\label{table heur bounds}
	\end{table}
We compare next these bounds with those in Theorem \ref{sat humane formula for number of points} and then discuss a few numerical experiments.

\subsection*{Comparisson between theoretical and heuristic bounds} A problem when comparing the heurestic bounds in Table \ref{table heur bounds} and those in Theorem \ref{sat humane formula for number of points} is that the former ones are given interms of the cardinality of $PX(\epsilon_1)$ while the latter ones are given in terms of the cardinality of $X$.

To connect these two quantities recall that we are acting as if all $\epsilon_1$-balls in $M$ were euclidean. Now, we get for example from Lemma \ref{lem listening to labordeta} that the number $\vert PX(\varepsilon_1)\vert$ of unordered pairs of points at distance at most $\varepsilon_1$ is approximately given by the formula
	\begin{equation}\label{pairs}
	\vert PX(\varepsilon_1)\vert\simeq \dfrac{n(n-1)}{2}\dfrac{\vol(B^{\BR^d}(\varepsilon_1))}{\vol(M)}.
	\end{equation}
Acting as if \eqref{pairs} were to give a perfect relation between the number of points and that of pairs, we get that to get the estimated $516$ pairs in the case that $M=\BT^4$ is the 4-dimensional torus we need $1958$ data points. In comparison, Theorem \ref{sat humane formula for number of points} gives an upper bound of $18262$ points.

Arguing like this we can convert the heuristic bounds in Table \ref{table heur bounds} to bounds for the needed cardinality of a data set in terms of the dimension and the volume of the underlying manifold---see Table \ref{table comparisson ns}.

	\begin{table}[h]
		\begin{tabular}{|l|l|l|}
			\hline
			$d$ & {\bf heuristic $n$} & {\bf $n$ from Theorem \ref{sat humane formula for number of points}}\\ \hline
1 & $5\cdot\vol(M)^{\frac 12}$ & $9+21\cdot\vol(M)^{\frac 12}$ \\ \hline
 2  & $12\cdot\vol(M)^{\frac 12}$ & $94+58\cdot\vol(M)^{\frac 12}$ \\ \hline
 3 & $22\cdot\vol(M)^{\frac 12}$ & $635+146\cdot\vol(M)^{\frac 12}$ \\ \hline
 4  & $50\cdot\vol(M)^{\frac 12}$ & $2786+392\cdot\vol(M)^{\frac 12}$ \\ \hline
 5 & $128\cdot\vol(M)^{\frac 12}$ & $7013+1119\cdot\vol(M)^{\frac 12}$ \\ \hline
 6  & $355\cdot\vol(M)^{\frac 12}$ & $13221+3366\cdot\vol(M)^{\frac 12}$ \\ \hline
 7 & $964\cdot\vol(M)^{\frac 12}$ & $25138+10644\cdot\vol(M)^{\frac 12}$ \\ \hline
 8 & $2949\cdot\vol(M)^{\frac 12}$ & $50033+34890\cdot\vol(M)^{\frac 12}$ \\ \hline
 9 & $9458\cdot\vol(M)^{\frac 12}$ & $63876+119533\cdot\vol(M)^{\frac 12}$ \\ \hline
 10 & $33021\cdot\vol(M)^{\frac 12}$ & $139412+425554\cdot\vol(M)^{\frac 12}$ \\ \hline
		\end{tabular}
\caption{Comparisson between the heuristic bound and the bound in Theorem \ref{sat humane formula for number of points} for the number of points that suffice to have $90\%$ success rate when applying \eqref{eq corrsum intro} to data sets sampled from a reach $1$ manifold.}\label{table heur bounds2}\label{table comparisson ns}
	\end{table}

One should keep in mind that entries the middle column in Table \ref{table comparisson ns} are only meningful for $\vol(M)$ large. Still, there is a very clear difference between both bounds, the heuristic bound and that from Theorem \ref{sat humane formula for number of points}. This difference is at least in part due to the curvature of the submanifold $M$, but things are not helped by either all the nested inequalities leading to Theorem \ref{sat humane formula for number of points} or the fact that the Bienaym\'e-Chebyshev inequality is not very precise.
	
\subsection*{Numerical evidence}

Experimental examples seem to confirm the heuristical bounds presented on Table \ref{table heur bounds}. By numerically sampling points on different manifolds, we obtain results close to what was expected. Here is the procedure we followed:

\begin{enumerate}
\item Choose a manifold $M$ of known dimension $d$ and consider the scales and number of pairs $N$ given by Table \ref{table heur bounds}.
\item Uniformly and independently sample points in $M$ until we obtain $N$ pairs at distance $\varepsilon_1$. The sampling is done by repeating a program specific to the desired manifold that samples a single point randomly and uniformly on it.
\item Estimate the dimension using estimator \eqref{eq corrsum intro} with scales $\varepsilon_1$ and $\varepsilon_2$.
\item Repeat steps 2 and 3 one hundred times and count the number of success.
\end{enumerate}

We ran this experiment a variety of manifolds of reach 1 and always obtained a rate of success in a range of $\pm 6\%$ of the target rate. This difference between the actual rate and the target would be totally normal even in an ideal situation. Indeed, repeating an experiment 100 time with a probability of success of 90\% gives a result with a standard deviation of 3 successes. Every experimental result then falls into the usual range of two times the standard deviation.

The manifolds were chosen to observe different situations:

\begin{itemize}
\item \textbf{Worms:} $0$-level set of a randomly produced function on $\BR^2$. The precise algorithm used for generating these manifolds and sampling from them is available in \cite{program}, as well as the algorithms for the other manifolds.
\item \textbf{Rotation torus:} Rotate around the $z$-axis the circle in the $xy$-plane of radius 1  and center $(2,0,0)$. This surface has reach $1$ and there is a mix of positive and negative curvature.
\item \textbf{Clifford torus:} The product $\BT^d=\BS^1\times\dots\times\BS^1\subset\BR^{2d}$ of $d$-circles of radius $1$. The Clifford torus is curved in Euclidean space but is flat as a Riemannian manifold.
\item \textbf{Flat torus:} This is an ideal situation. We consider namely the abstract manifold $\BR^d/2\pi\cdot\BZ^d$ with its inner distance---it is not embedded in some larger Euclidean space.
\item \textbf{Swiss roll:} The Swiss roll is one of the standard objects on which manifold learning argorithms seem to be tested, but it also adds a manifold with boundary to our list. For the sampling, we used the function \textit{make\_swiss\_roll} from the library \textit{scikit-learn}.
\item \textbf{Schwarz $P$ surface:} This is the triply periodic surface in $\BR^3$ with equation $\cos(x)+\cos(y)+\cos(z)=0$---it approximates one of Schwartz's triply periodic minimal surfaces and thus shows features of negative curvature, both intrinsic and extrinsic. To be able to deal with a finite volume surface we consider it as a submanifold of the 3-dimension flat torus.
\item \textbf{Spheres:} This is the standard sphere $\BS^d=\{x\in\BR^{d+1}\text{ with }\Vert x\Vert=1\}$. Spheres have reach $1$ and are positively curved.
\item \textbf{Gaussian distribution:} The standard Gaussian distribution in $\BR^d$. The reason why we test this in particular is to include a non-uniform distribution in our list.
\end{itemize}
The way we sample points depends on the concrete manifold under consideration, but we stress that we are sampling each point independently. More precisely, we are not aiming at getting point at some uniform distance of each other. 
\begin{figure}[h]
	\centering
	\includegraphics[width=10cm]{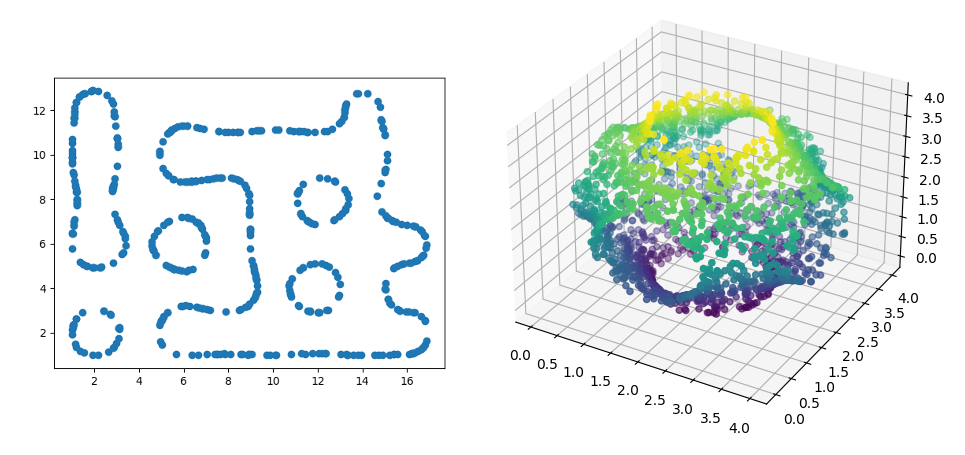}
	\caption{Examples of samplings from manifolds. 500 points on a "\textit{Worms}" manifold (Left) and 2\,000 points on a piece of the Schwarz surface (Right).}
	\label{samplings}
\end{figure}

\begin{table}[H]
	\begin{tabular}{|l|l|l|l|}
		\hline
		\textbf{Manifold} & $d$ & \textbf{90\% target} & \textbf{70\% target} \\
		\hline
	 	Worms                 & 1 & 88\% & 66\% \\ \hline
	 	Rotation torus                 & 2 & 92\% & 70\% \\ \hline
	 	Clifford torus        & 2 & 89\% & 69\% \\ \hline
	 	Flat torus            & 2 & 88\% & 66\% \\ \hline
	 	Swiss roll            & 2 & 93\% & 69\% \\ \hline
		Schwarz surface       & 2 & 88\% & 66 \% \\ \hline
	 	3-sphere              & 3 & 92\% & 76\% \\ \hline
	 	4-sphere              & 4 & 89\% & 75\% \\ \hline
	 	Product of two rotation tori   & 4 & 92\% & 70\% \\ \hline
	 	Clifford torus        & 4 & 93\% & 72\% \\ \hline
	 	Flat torus            & 4 & 90\% & 74\% \\ \hline
	 	Product of two Schwarz surfaces       & 4 & 92\% & 72\% \\ \hline
	 	Gaussian distribution in $\BR^4$ & 4 & 90\% & 76\% \\ \hline
	 	5-sphere              & 5 & 93\% & 74\% \\ \hline
	\end{tabular}
	\caption{Experimental rates of success for different manifolds by considering the ammount of pairs of points suggested by Table \ref{table heur bounds}.}\label{table expe}
\end{table}

These results tend to confirm the conclusions of the heuristic model. We can test this model a bit more by testing the estimator with the number of points given by Table \ref{table comparisson ns}. This is what we did in Table \ref{table expe 2}, and we still obtain results close to what was expected.

\begin{table}[H]
	\begin{tabular}{|l|l|l|l|}
		\hline
		\textbf{Manifold} & $d$ & \textbf{rate of success} \\
		\hline
		Clifford torus        & 2 & 91\% \\ \hline
		3-sphere              & 3 & 91\% \\ \hline
	 	Flat torus            & 4 & 91\% \\ \hline
	 	Product of two tori   & 4 & 94\% \\ \hline
	\end{tabular}
	\caption{Experimental rates of success for different manifolds by sampling the number of points given by the "heuristic" column in Table \ref{table comparisson ns}.}\label{table expe 2}
\end{table}

\subsection*{Reach free estimator}

A problem with all the results we have been discussing so far is that in practice one has little clue what the reach of the underlying manifold could be. However, as we already mentioned in the introduction, one can actually derive from Table \ref{table heur bounds} an estimator which does not need any a priori bound on the reach. 
\begin{quote}
{\bf Assumption:} We have a data set $X\subset\BR^s$ of which we think that it has been sampled from some mysterious submanifold $M\subset\BR^s$. We trust however that our data set is good enough and we want to test if $M$ could plaussibly have dimension $d$.
\end{quote}
\begin{quote}
{\bf Test:} For the chosen $d$, let $\epsilon_1$, $\epsilon_2$ and $N$ be as in Table \ref{table heur bounds} (say from the 90\% column). Now take $R>0$ to be minimal with $\vert PX(R)\vert\ge N$ and set $r=\frac{\epsilon_2}{\epsilon_1}R$. Now check if $\dim_{\Corr(R,r)}(X)=d$. 
\end{quote}
The heuristic discussion above, as well as the numerical experiments, suggest that as long as our data set is rich enough so that $R\le 0.54\cdot\tau(M)$ then we should get $\dim_{\Corr(R,r)}(X)=d$ with a 90\% probability.

Wondering what would happen if we run this algorithm on more "real" data sets, we chose 3 different data sets of respective dimension 1, 2 and 3 (see \cite{program}). Each data sets consist in 200 grayscale pictures of the 3D-model \textit{Suzanne} with random rotations (see Figure \ref{monkey}). We respectively randomized 1, 2 and 3 Euler angles to obtain the desired dimensions. The pictures are 64 by 64 pixels large and can thus be represented as points in $\mathbb{R}^{4096}$. The results are presented in Table \ref{table monkey}. For each data set, we test dimension 1, 2, 3 and 4 and we show the estimated dimension (before rounding it to the closest integer).

Note that when using the parameters of dimension 1 for estimating the dimension of a higher dimensional set, we usually get no points at distance $\varepsilon/2$, as $(1.5/0.19)^2$ is higher than $30$. When this happens, the estimator \eqref{eq corrsum intro} cannot be computed, we can only conclude that the dimension is probably bigger than 1.

\begin{table}[H]
	\begin{tabular}{|l|l|l|l|}
		\hline
		 \textbf{Hypothesis} & \textbf{Data set 1} & \textbf{Data set 2} & \textbf{Data set 3}\\
		\hline
	 	\textbf{dimension = 1} & \cellcolor[HTML]{FFDDDD}1.13 & $>1$ & $>1$ \\
	 	\hline
	 	\textbf{dimension = 2} & \cellcolor[HTML]{FFDDDD}1.59 & \cellcolor[HTML]{FFDDDD}2.02 & 3.54 \\
	 	\hline
	 	\textbf{dimension = 3} & 1.64 & 2.03  & \cellcolor[HTML]{FFDDDD}3.33 \\
	 	\hline
	 	\textbf{dimension = 4} & 1.45 & 2.08  & 3.40 \\
	 	\hline
	\end{tabular}
	\caption{Testing the dimension of the 3 "real" data sets. To test dimension=1 (resp. dimension=2, resp. dimension=3) we set $\epsilon_1$ so that $\vert PX(\epsilon_1)\vert=30$ (resp. 122, resp. 249). Cells with a pink background represent the tests in which the estimated dimension is consistent with the tested dimension. The cells with the value "$>d$" mean that there were no pairs at distance $\varepsilon_2$ for the corresponding number of pairs at distance $\varepsilon_1$.}\label{table monkey}
\end{table}

%

We can see in Table \ref{table monkey} that there were no Type I errors (meaning that we never rejected the true hypothesis). On the other hand we found a Type II error, when the data set of dimension 1 passed the test for dimension 2.

\section{Comparison with other estimators}\label{sec: compare}

In this paper we study the estimator \eqref{eq corrsum intro}, but a variety of other algorithms exist. For example, instead of doing statistics using only the distances between points, we could also use the angles between points, or other more complex features. These different approaches can be compared using the previous heuristic model. We refer to \cite{Camastra,CamastraS} and specially to \cite[Chapter 3]{LVbook} for a review of different dimension estimators. We will compare \eqref{eq corrsum intro} with the estimators ANOVA, local PCA, and with the implementation of \eqref{eq corrsum intro} where one tries to avoid picking scales, reading instead the dimension from a log-log chart.
	
	\subsection*{ANOVA}
	Diaz, Quiroz and Velasco propose in \cite{ANOVA} a method based on the angles refered as ANOVA in the literature. Their idea is to estimate the local dimension around a point $x$ by considering its $k$ nearest neighbors and the ${k \choose 2}$ angles at $x$ formed by these points. From the variance of these ${k \choose 2}$ angles, we can identify the closest $\beta_d$ and deduce the dimension $d$, where $\beta_d$ is defined as follows.
	\[\beta_d=\frac 1{\vol(\BS^{d-1})^2} \int_{\BS^{d-1}\times\BS^{d-1}} \left(\angle(\theta,\eta)-\dfrac{\pi}{2}\right)^2 d\theta d\eta\]
	The global dimension can then be recovered by taking the median, the mode or the mean of the local dimensions.
	
To be able to compare ANOVA to the estimator \eqref{eq corrsum intro} we will instead take the variance of all angles. More precisely, consider every ordered triple of points in which all three points are at distance at most $\varepsilon_1$ from each other. Each such triple $(x,y,z)$ determines an angle $\angle(y-x,z-x)$. We compute the variance of the so-obtained angles we can locate the closest $\beta_d$. We take that $d$ to be the ANOVA dimension of our data set.

	Putting ourselves again in the ideal situation that we are working in a scale at which curvature can be ignored and assuming (and that is a lot of assuming in this case) that the angles we find are independent of each other, we can argue as earlier in the discussion of the heuristic bound and we get that, in dimension 4, we would need at least $652$ angles to achieve a confidence level of 90\% with the ANOVA estimator. 

Suppose now that we sample 1958 points from the 4-dimensional Clifford torus $M=\BT^4$. That number was chosen so that we expect to have $516$ pairs within $\epsilon_1=0.54$ of each other, the heuristic bound for \eqref{eq corrsum intro} in dimension $4$. On the other hand we expect to have 
$$\text{number of unordered triples}={1958 \choose 3}\cdot\left(\frac{\vol(B^{\BR^4}(\epsilon_1))}{\vol(M)}\right)^2\simeq 91.$$
Each unordered pair gives 3 angles, meaning can expect to find about $273$ angles. This is much less than what we estimated that would needed.

And this phenomenon gets worse when the volume of the underlying manifold grows: if the manifold has volume $64\pi^4$ and if we have a data set with $516$ pairs of points within $\epsilon_1=0.54$ then we expect to only have $45$ triples, that is about $135$ angles. The reason for this is that the number of points needed to have a given number of pairs grows with $\vol(M)^{\frac 12}$ while it grows as $\vol(M)^{\frac 23}$ when we fix the number of triples instead. This means that for large volumes the algorithm will only become worse as finding triples of points will become more and more difficult.\\

We numerically compared the rates of success of estimator \eqref{eq corrsum intro} and of ANOVA for the Clifford torus with the same scales. The results are presented in Table \ref{table anova}. This experiment shows that, for this example and for these scales, estimator \eqref{eq corrsum intro} gives significantly better results than ANOVA. However, we cannot conclude that estimator \eqref{eq corrsum intro} has better performance in general. In particular, for manifolds of smaller volume or for different choices of scales, ANOVA could in principle give better results.

\begin{table}[H]
	\begin{tabular}{|l|l|l|l|l|l|}
		\hline
		\textbf{Manifold} & $d$ & \textbf{number of points} & \textbf{estimator \eqref{eq corrsum intro}} & \textbf{ANOVA}\\
		\hline
		Clifford torus & 2 & 76 & 93\% & 65\% \\ \hline
		Clifford torus & 3 & 347 & 93\% & 67\% \\ \hline
	\end{tabular}
	\caption{Comparison of the experimental rates of success between estimator \eqref{eq corrsum intro} and ANOVA on the Clifford torus.}\label{table anova}
\end{table}

\begin{bem}
Recall that we have considered a variation of ANOVA---our conclusions should also apply to the original algorithm as well.
\end{bem}

	\subsection*{Local PCA}
	Principal Component Analysis aims to find the best linear space containing a given data set. According to \cite{BKSW}, PCA is the gold standard of dimension estimation. It works as follows. To our given data set $X=(x_1,\dots,x_n)\subset\BR^N$ we associate first the mean 
	$$\bar x=\frac 1n\sum_{i=1}^n x_n$$
	and then the $n\times N$ matrix $A$ whose rows are the vectors $u_i-\bar u$, and one computes then the singular values $s_1\ge s_2\ge\dots\ge s_{\min\{n,N\}}$. If $X$ is contained in a linear subspace of dimension $d$ then $s_k=0$ for all $k\ge d+1$. Accordingly, one can declare that $X$ has PCA-dimension $k$ if the gap $s_k-s_{k+1}$ is maximal. Another possibility would be to fix a threshold $\epsilon$ and declare the PCA-dimension of $X$ to be the largest $k$ with $s_k\ge\epsilon$. For example, if what one wants to do is to test the hypothesis that $X$ is contained in a $d$-dimensional subspace then one can check for example if $s_{d+1}$ is below some threshold $\epsilon<\frac 1{d+2}$---the number $\frac 1{d+2}$ arises because it is the expected value for $s_d$ if $X$ is uniformly sampled out of the unit ball in $\BR^d$.

	In any case, that was PCA. The idea of {\em local PCA}, or {\em Nonlinear PCA}, is to apply PCA to certain subsets of the data set and then, for good measure, average the so obtained numbers. As we see, one does not only need to agree on what one calls PCA, but also on what subsets does one wants to subject to the PCA treatement. For example, as in \cite{BKSW} one can cluster the data set using single linkage clustering\footnote{In single linkage clustering the clusters are, for some $\epsilon$, the connected components of the graph with vertex set $X$ and where two vertices are joined by an edge if they are with $\epsilon$ of each other.} and then apply PCA to each custer. This is, in our humble and uneducated opinion, a very reasonable choice if, for data sets $X$ uniformly sampled out of a submanifold $M\subset\BR^N$, what one wants to do is to find linear spaces (approximately) containing each connected component of $M$. On the other hand, if what one wants to do is to recover the dimension of $M$ then it seems reasonable to rather apply PCA to (some of) the sets $X\cap\BB(x,\epsilon)$ for $x\in X$ and for some $\epsilon$ chosen so that $M\cap\BB(x,\epsilon)$ is well-approximated by its tangent space.\\

Note now that every set consisting of $d+1$ points is contained in a $d$-dimensional affine subspace of $\BR^s$. This means that if we want to distinguish dimension $d$ from dimension $d+1$ we need at the very least to find $d+2$ tuples of nearby points. Now, if we once again sample $1958$ points out of the Clifford torus $M=\BT^4$ so that we can expect the magic number of $516$ paris of points at most at distance $0.54$ from each other, then we can expect to find
$$\begin{array}{c}
\text{number of unordered}\\
\text{6-tuples at scale }0.54
\end{array}={1958 \choose 6}\cdot\left(\frac{\vol(B^{\BR^4}(0.54))}{\vol(M)}\right)^5\simeq 0.110373....$$
In other words, the expectation is to not have any $6$-tuples, meaning that, if we are working at the same scales as we were implementing \eqref{eq corrsum intro}, then we have no chance of distinguishing our 4-manifold from a 5-manifold using local PCA. Note that even if we work at a scale of $2$, scale at which we are acting as if the round sphere were totally flat, then we can expect to only have 2 6-tuples
$$\begin{array}{c}
\text{number of unordered}\\
\text{6-tuples at scale }2
\end{array}={1958 \choose 6}\cdot\left(\frac{\vol(B^{\BR^4}(2))}{\vol(M)}\right)^5\simeq 2.043944....$$
As we see, even if we work at scales at which we are flat earthers, we do not have by far enough sufficiently populated clusters for local PCA to be meaningful.

\subsection*{log-log plots}
Another method to derive the estimator \eqref{eq corrsum intro} when the reach is unknown, is to take a lot of scales
$$\epsilon_1<\epsilon_2<\dots<\epsilon_k,$$
and to consider the graph of the piecewise linear function with corners
\[\big(\log(\epsilon_i),\log(\vert PX(\epsilon_i)\vert)\big)\]
and try to read out some sort of meaningful slope of that function (see \cite[Chapter 3]{LVbook}). We implemented this procedure for 1\,000 points uniformly sampled from the product of two 2-dimensional rotation tori of reach 1. We obtained the graph presented on Figure \ref{loglog}.

\begin{figure}[h!]
    \begin{center}
        \resizebox{.9\linewidth}{!}{\input{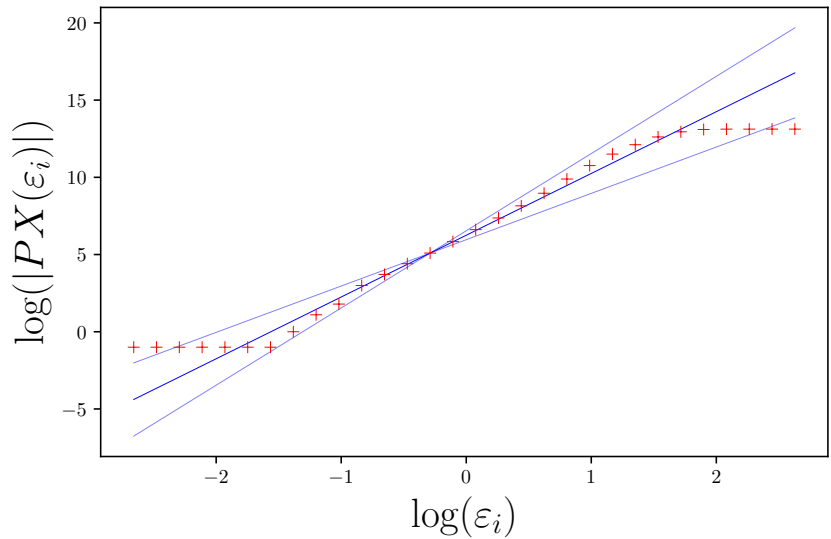}}
    \end{center}
    \caption{Log-log plot for a set of 1000 points sampled out of the product of two rotation tori. The graph also features three lines of slope 3, 4 and 5, for comparison.}
    \label{loglog}
\end{figure}

Three different parts can be observed on this graph:
\begin{enumerate}
\item A flat part, when the scale is smaller than the minimal distance between the points (represented with the value -1 on the y-axis). 
\item A mostly linear part whose slope should approximate the dimension of $M$.
\item A flat part when the scale becomes greater than the diameter of $M$ (which plateaus at $\log(1000\cdot 999/2)\simeq 13.12$).
\end{enumerate}
It is notable that in this example the second part looks linear even after twice the reach, even if the behavior of $\log(\vert PX(\epsilon_i)\vert)$ beyond this scale is unpredictable. Around the middle of this graph, the slope of the linear part seems to be close to 4, as predicted.
\medskip

We also applied this method from the 3 "real" datasets of pictures of \textit{Suzanne} with random rotations. The results are presented in Figure \ref{monkey loglog}. For dimension 1 and 2, it worked surprisingly well and the log-log plots clearly show what is the right dimension. The plot for the data set of dimension 3 is not as clear, but the result seems consistent.
\medskip

Let us conclude with two further comments on the log-log procedure:

\noindent{\bf (1)} The log-log procedure seems to allow us to estimate the dimension of $M$ without any assumption on its reach. It is however difficult to define a precise algorithm for it. Consequently, we cannot really compare its performance with that of the other estimators.

\noindent{\bf (2)}
Additionally, this method gives us insight on the choice of scales that could be relevant. Indeed, the scales determined in Section \ref{sec: searche decent scales} have no reason to be optimal for the heuristic model. And as this model does not take the curvature nor the reach into account, it is impossible to use it to find optimal scales. When considering Figure \ref{loglog}, it seems natural to choose two scales that are far apart to measure the slope as precisely as possible, but close enough so that they do no get too close to the extremities. Here, two scales that would seem relevant could be $\exp(0.5)\simeq 1.6$ and $\exp(-0.5)\simeq 0.6$. In any case, the choice of scales $0.54\simeq\exp(-0.6)$ and $0.23\simeq\exp(-1.5)$ seem to be far from optimal. Using the estimator \eqref{eq corrsum intro} with scales 1.6 and 0.6 on computational examples, we obtain a rate of success of 99\% with only 1\,000 points for the product of two tori. For the scales 0.54 and 0.23, the heuristic model gives us a number of points of 3916 for a rate of success of 90\%. This rate falls to 41\% with only 1\,000 points on computational examples. However, these scales have been chosen after the study of the results to obtain the desired answer on a given manifold. But this does not give us a method to choose optimal scales a priori. For example, the scales 1.6 and 0.6 one give a rate of success of 45\% for 100 points on the 4-sphere, while the scale 1 and 0.6 give a rate of 92\%.

\begin{figure}[H]
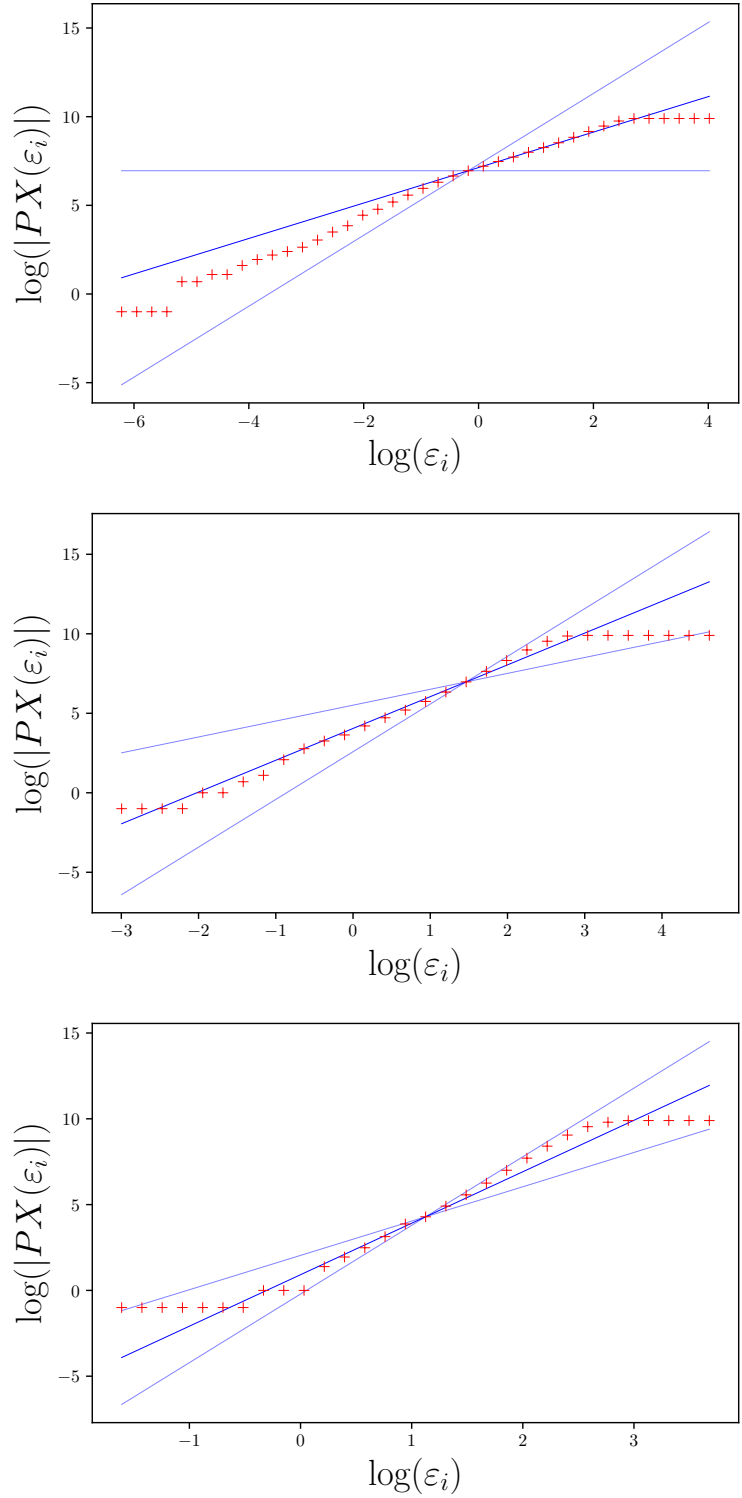

    \begin{center}
        \resizebox{.8\linewidth}{!}{\input{monkey1.pgf}}
        \resizebox{.8\linewidth}{!}{\input{monkey2.pgf}}
       	\resizebox{.8\linewidth}{!}{\input{monkey3.pgf}}
    \end{center}
    \caption{Log-log plots the 3 "real" datasets of pictures of \textit{Suzanne} with random rotations of dimension 1, 2 and 3. The blue lines have slopes of $d-1$, $d$ and $d+1$.}
    \label{monkey loglog}
\end{figure}

\end{document}